\newcommand{\RP}{\mathbb{R}_{\ge 0}}
\newcommand{\R}{\mathbb{R}}
\newcommand{\N}{\mathbb{N}}
\newcommand{\locs}{L} 
\newcommand{\loc}{\ell} 
\newcommand{\clocks}{\mathcal{X}} 
\newcommand{\guard}{\varphi} 
\newcommand{\reset}{\lambda}
\newcommand{\val}{\nu} 
\newcommand{\cval}{{\mathrm{val}}}
\newcommand{\true}{\texttt{true}} 
\newcommand{\tuple}[1]{\langle #1 \rangle}
\newcommand{\edges}{E}
\theoremstyle{definition}
\title{Reachability for Multi-Priced Timed Automata with Positive and 
  Negative Rates}
\titlerunning{Reachability for Multi-Priced Timed Automata with Positive and 
  Negative Rates}
\author{Andrew Scoones}{Department of Computer Science, University of Oxford, United Kingdom}{andrew.scoones@cs.ox.ac.uk}{https://orcid.org/0000-0002-0610-7998}{Supported by UKRI Frontier Research Grant EP/X033813/1}
\author{Mahsa Shirmohammadi}{CNRS, IRIF, Universit\'{e} of Paris
  Cit\'{e},
  France}{mahsa@irif.fr}{https://orcid.org/0000-0002-7779-2339}{Supported
by VeSyAM (ANR-22- CE48-0005)}
\author{James Worrell}{Department of Computer Science, University of
  Oxford, United
  Kingdom}{jbw@cs.ox.ac.uk}{https://orcid.org/0000-0001-8151-2443}{Supported
  by UKRI Frontier Research Grant EP/X033813/1}
\authorrunning{A.\ Scoones, et al.} %
\keywords{  Bilinear constraints, Existential theory of real
  closed fields, Diophantine approximation, Pareto curve}
\begin{document}
\maketitle

\begin{abstract}
  Multi-priced timed automata (MPTA) are timed automata with observer
  variables whose derivatives can change from one location to another.
  Observers are write-only variables, that is, they do not affect the control
  flow of the automaton; thus MPTA lie between timed and hybrid
  automata in expressiveness.  Previous work considered observers with
  non-negative slope in every location.  In this paper we treat
  observers that have both positive and negative rates.  Our
  main result is an algorithm to decide a gap version of the
  reachability problem for this variant of MPTA.  We translate the
  gap reachability problem into a gap satisfiability problem for mixed
  integer-real systems of nonlinear constraints.  Our main technical
  contribution -- a result of independent interest -- is a procedure
  to solve such contraints via a combination of branch-and-bound
  and relaxation-and-rounding.
  \end{abstract}


  \section{Introduction}\label{sec:intro}
  Timed automata~\cite{AlurDill94} are a widely studied model of
  real-time systems that extend classical finite state-automata with
  real-valued variables, called \emph{clocks}, that evolve with
  derivative one and which can be queried and reset along transitions.
  \emph{Multi-Priced Timed Automata} (MPTA)~\cite{Bouyer08b,
    Brihayebruyereraskin06,SwaminathanF09,TollundJNTL24} further
  extend timed automata with write-only variables, called
  \emph{observers}, that have a non-negative slope that can change
  from one location to another.  Such variables can model the
  accumulation of costs or the use of resources along a computation,
  such as energy and memory consumption in embedded systems, or
  bandwidth in communication networks.  For this reason MPTA are
  widely used to model multi-objective real-time optimisation
  problems~\cite{BouyerFLM11}.

  While observers exhibit richer dynamics than clocks, they may not be
  queried while taking edges.  Thus MPTA lie between timed automata
  (for which reachability is decidable) and linear hybrid automata
  (for which reachability is undecidable~\cite{Henzinger98}).
  A natural class of verification problems for MPTA concerns
  reachability subject to constraints on the observers.  A simple
  variant is the \emph{Domination Problem}, which asks to reach a
  location subject to upper bounds on each observer.  Here one can
  think of the constraints as representing upper bounds on accumulated
  costs or resources.  The Domination Problem was shown decidable
  in~\cite{Larsen08} using well-quasi-orders and was later shown to be
  PSPACE-complete in~\cite[Theorem 4]{FranzleSSW22}.

  A more expressive version of the Domination Problem partitions the
  set of observers into \emph{cost variables} and \emph{reward
    variables} and asks to reach a location subject to upper bounds on
  costs and lower bounds on rewards.  This variant is, unfortunately,
  undecidable.  However it is shown in~\cite[Theorem 6]{FranzleSSW22}
  that a gap version of the problem---called the \emph{Gap Domination
    Problem}---is decidable.  In the Gap Domination Problem the input
  additionally contains a slack $\varepsilon>0$.  The objective is to
  distinguish the case that the constraints on the observers can be
  satisfied with slack $\varepsilon$ from the case in which they
  cannot be satisfied at all.  In general, gap problems are decision
  versions of approximation problems~\cite[Chapter 18.2]{AB06}.
  Decidability of the Gap Domination Problem implies that the Pareto
  curve of undominated reachable cost vectors can be computed to
  arbitrary precision (cf.~\cite{DiakonikolasY09}).

  The objective of this paper is to address a more expressive variant
  of MPTA than hitherto considered: namely those in which observers
  can have both positive and negative rates.  Alternatively, and
  equivalently, one can consider MPTA with nonnegative rates, but in
  which one allows reachability specifications to contain constraints
  on the \emph{difference} between two observers rather than just
  threshold constraints that compare observers to constants. Indeed,
  this extension is motivated by the desire to measure net resource
  use along computations.  In this more general setting, the
  Domination Problem, of course, remains undecidable; one moreover
  loses monotonicity properties on which previous positive
  decidability results rely, including the decision procedure for the
  Gap Domination Problem given in~\cite[Theorem 15]{FranzleSSW22}.
  The main result of this paper is to establish decidability (in
  nondeterministic exponential time) of the Gap Domination Problem in
  the presence of positive and negative rates via a new decision
  procedure.

We start by recalling a result of~\cite{FranzleSSW22} that
characterises the set of all reachable observer values for a given
MPTA via a system of mixed integer-real nonlinear constraints.  Our
main technical contribution, which is of independent
interest, shows how to solve a gap version of the satisfiability
problem for such systems of constraints.  Our method involves a 
combination of relaxation-and-rounding and branch-and-bound that
relies on Khinchine's Flatness Theorem from Diophantine
approximation.  We formulate a relaxation of the system of constraints
such that a solution to the relaxed version can be rounded to a
solution of the original problem, while unsolvability of the relaxed
version permits a branch-and-bound step that eliminates a variable
from the original system of constraints.

Systems of non-linear constraints over integer and real variables
appear in many different domains and are widely studied, although
typically not from the point of view of decidability since most
classes of problems with unbounded integer variables are
undecidable~\cite{hemmecke2010nonlinear}.  Other
than~\cite{FranzleSSW22}, we are not aware of previous work on the gap
problem considered here.  Kachiyan and
Porkolab~\cite{Khachiyan2000IntegerOO} showed that it is decidable
whether a convex semialgebraic set contains an integer point; however
we work with non-convex sets.

In this paper we consider MPTA with arbitrarily many observers.  There
is a significant literature and mature tool support concerning the
special case of MPTA with a single observer, which are variously
called Priced Timed Automata or Weighted Timed Automata.  In this
case, the optimal cost to reach a given location is
computable~\cite{Alur01PTA,Behrmann01,Larsen01}.  In the case of one
cost and one reward observer, one can also compute the optimal
reward-to-cost ratio in reaching a given location~\cite{Bouyer08b}.
The preceding results use the so-called \emph{corner-point
  abstraction}, which is insufficient for multi-objective model
checking.  Instead, the present paper implicitly relies on the
\emph{simplex-automaton abstraction}, introduced
in~\cite{FranzleSSW22}, which underlies the non-linear constraint
problems that are the subject of our main results.  All previously
mentioned works involve observers that evolve linearly with time.
Observer variables that vary non-linearly with time are considered
in~\cite{BhaveKT16}.  In the non-linear setting the optimal cost
reachability problem is undecidable in general.  Another variant, this
time towards greater simplicity, is to consider observers that are
only updated through discrete transitions~\cite{Zhang17}.


\section{Automata and Decision Problems}
\subsection{Multi-Priced Timed Automata}
\label{sec:MPTA}

Let $\RP$ denote the set of non-negative real numbers.  Given a
set~$\clocks=\{x_1,\ldots,x_n\}$ of \emph{clocks}, the set
$\Phi(\clocks)$ of \emph{clock constraints}\label{clock-constraint} is
generated by the grammar \[ \varphi ::= \true \mid x\leq k \,\mid\,
x\geq k \,\mid\, \varphi \wedge \varphi \, , \]where $k \in \N$ is a
natural number and $x\in \clocks$.  A \emph{clock valuation} is a
mapping~$\val: \clocks \to \RP$ that assigns to each clock a
non-negative real number.  We denote by $\boldsymbol 0$ the valuation such
that~$\boldsymbol 0(x)=0$ for all clocks $x\in \clocks$.  We write
$\val\models\varphi$ to denote that~$\val$ satisfies the
constraint~$\guard$.  Given $t\in\RP$, we let $\val+t$ be the clock
valuation such that~$(\val+t)(x)=\val(x)+t$ for all
clocks~$x\in\clocks$.  Given $\reset\subseteq\clocks$, let
$\val[\reset\leftarrow 0]$ be the clock valuation such that
$\val[\reset\leftarrow 0](x)=0$ if~$x\in\reset$, and
$\val[\reset\leftarrow 0](x)=\val(x)$ otherwise.

A \emph{multi-priced timed automaton} (MPTA)
$\mathcal{A}=\tuple{\locs,\ell_0,L_f,\clocks,\mathcal{Y},\edges,R}$
comprises a finite set~$\locs$ of \emph{locations}, an \emph{initial
  location} $\ell_0\in L$, a set $L_f\subseteq L$ of \emph{accepting
  locations}, a finite set $\clocks$ of \emph{clock variables}, a
finite set $\mathcal{Y}$ of \emph{observers}, a set
$\edges\subseteq \locs\times \Phi(\clocks)\times 2^\clocks\times
\locs$ of \emph{edges}, and a \emph{rate function}
$R : L \rightarrow \mathbb{Z}^{\mathcal{Y}}$.  Here
$R(\loc)(y)$ is the derivative of the observer $y \in \mathcal Y$ in
location $\loc$.  Denote by $\| \mathcal A\|$ the length of the
description of $\mathcal A$, where all integers are written in binary.

A \emph{state} of $\mathcal{A}$ is a triple $(\loc,\nu,t)$ where $\loc$
is a location, $\nu$ a clock valuation, and~$t\in\RP$ is a \emph{time
  stamp}.  A \emph{run} of $\mathcal{A}$ is an alternating sequence of
states and edges
\[ \rho = (\ell_0,\nu_0,t_0) \stackrel{e_1}{\longrightarrow}
  (\ell_1,\nu_1,t_1)\stackrel{e_2}{\longrightarrow} \ldots
  \stackrel{e_m}{\longrightarrow} (\ell_m,\nu_m,t_m) \, ,\] where $t_0=0$,
$\nu_0=\boldsymbol{0}$,~$t_{i-1} \le t_i$ for all
$i\in\{1,\ldots,m\}$, and
$e_i =\tuple{\ell_{i-1},\varphi,\lambda,\ell_i} \in E$ is such that
$\nu_{i-1} +(t_i-t_{i-1})\models \varphi$ and~$\nu_i = (\nu_{i-1}+(t_i-t_{i-1}))[\lambda \leftarrow 0]$
for~$i=1,\ldots,m$.  The run is \emph{accepting} if $\ell_m\in L_f$.
The \emph{value} of such a run is a vector $\cval(\rho) \in \mathbb{R}^{\mathcal{Y}}$, defined by
$\cval(\rho) = \sum_{j=0}^{m-1} (t_{i+1}-t_i) R(\ell_i) \, .$
We refer to Figure~\ref{fig:automaton} for an example of an MPTA and
its operational semantics.

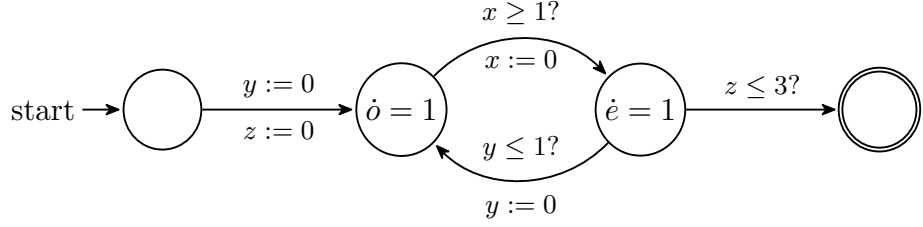
\begin{figure}
  \begin{center}
        \resizebox{4.8in}{1.2in}{%
\usetikzlibrary {arrows.meta,automata,positioning} 
\begin{tikzpicture}[->,>={Stealth[round]},shorten >=1pt, 
                    auto,node distance=2.7cm,on grid,semithick, 
                    inner sep=2pt,bend angle=45]
  \node[initial,state] (A)                    {}; 
  \node[state]         (B) [right=of A] {$\dot{o}=1$}; 
  \node[state]         (C) [right=of B] {$\dot{e}=1$}; 
  \node[accepting,state]         (D) [right=of C] {}; 

  \path [every node/.style={font=\footnotesize}]
  (A) edge node [above=0.5mm] {$y:=0$} node [below=0.5mm] {$z:=0$ }(B) 
  (B) edge [bend left] node [above=0.9mm] {$x\geq 1?$} node [below=0.5mm] {$x:=0$}   (C) 
 (C) edge [bend left] node [above=1.3mm] {$y\leq 1?$} node [below=0.9mm]
 {$y:=0$}  (B) 
  (C) edge node [above=0.5mm] {$z\leq 3?$}  (D); 
  
      \end{tikzpicture}}
    \end{center}
    \caption{ An MTPA with three clocks $x,y,z$ and two observer
      variables $o,e$, respectively standing for \emph{odd} and
      \emph{even}.  The observer variables have slope 0 unless
      otherwise indicated; thus $o$ aggregates the total dwell time in
      the \emph{odd} state and $e$ aggregates the total dwell time in
      the \emph{even} state.  An accepting run is
      completely determined by a sequence of nonnegative real numbers
      $d_0,\ldots,d_{2k}$, giving the respective delays between successive
      transitions.  Suppose we wish to reach the accepting state
      subject to the two objectives $e \geq 2$ and $o \geq 1$.  This
      is achieved, among others, by the run with sequence of time delays
      $\frac{2}{3}, \frac{1}{3},\frac{2}{3}, \frac{1}{3},\frac{2}{3},
      \frac{1}{3},\frac{2}{3}$ and the run with integer
      sequence of delays $1,0,1,0,1,1,0$ (and any convex combination
      of the two runs).  If the inequalities in the guards on $x$ and
      $y$ are replaced by equalities then the first run is the unique
      one realising the two given objectives.  In the case of
      so-called \emph{pure} reachability objectives, i.e., exclusively
      upper bound constraints or exclusively lower bound constraints
      on the observers,
      there is an explicit upper bound on the granularity of the
      delays in a run witnessing that the ojbective is realisable
      ($\frac{1}{3}$ in the present example)~\cite[Section
      6]{FranzleSSW22}.  This no longer holds in the case of
      reachability objectives that contain both upper and lower bounds
      on observers.}
\label{fig:automaton}
\end{figure}

\subsection{The Gap Domination Problem}

The \emph{Domination Problem} is as follows. Given an MPTA
$\mathcal{A}$ with set $\mathcal{Y}$ of observers and a
target~$\gamma\in \R^\mathcal{Y}$, decide whether there is an
accepting run $\rho$ of $\mathcal{A}$ such that
$\cval(\rho) \leq \gamma$ pointwise.

Our formulation of the Domination Problem involves a conjunction of
constraints of the form $y \leq c$, where $y\in\mathcal{Y}$ and
$c\in \mathbb{Q}$.  However such inequalities can encode
more general linear constraints of the
form~$a_1 y_1+\cdots+ a_ky_k \sim c$, where
$y_1,\ldots,y_k \in \mathcal{Y}$, $a_1,\ldots, a_k, c \in \mathbb{Z}$
and ${\sim} \in\{\leq,\geq,=\}$.  To do this one introduces a fresh
observer to denote each linear term~$a_1 y_1+\cdots+ a_ky_k$ (two
fresh observers are needed for an equality constraint).  For this
reduction it is crucial that we allow observers with negative rates.

The Domination Problem is PSPACE-complete for MPTA with positive rates
only~\cite[Theorem 11]{FranzleSSW22}, but is undecidable if negative rates are
allowed~\cite[Theorem 3]{FranzleSSW22}.  This motivates us to consider the
\emph{Gap Domination Problem}---a variant of the above problem in
which the input additionally includes a \emph{slack parameter}
$\varepsilon>0$.  If there is some run $\rho$ such that
$\cval(\rho)\leq \gamma-\varepsilon$ then the output should be
``dominated'' and if there is no run $\rho$ such
that~$\cval(\rho) \leq \gamma$ then the output should be ``not
dominated''.  In case neither of these alternatives hold (i.e.,
$\gamma$ is dominated but not with slack $\varepsilon$) then there is
no requirement on the output.  The Gap Domination Problem is the
decision version of the task of computing $\varepsilon$-approximate
Pareto curve in the sense of~\cite{DiakonikolasY09}.

The following proposition and (a generalisation
of~\cite[Propositions 6 and 7]{FranzleSSW22}), concerning the
structure of the set of reachable vectors of observer values, allows
us to reduce the Gap Domination Problem to a Diophantine problem.
Geometrically the proposition says that the set of reachable observer
vectors consists of a countable union of simplexes, where each simplex
is specified by its vertices -- a tuple of integer vectors -- and the
set of such tuples is semilinear.  The proposition is based on the
fact that if there are $d$ observers then any reachable observer valuation is a
convex combination of $d+1$ valuations that are respectively reached
along $d+1$ runs, all taking the same sequence of edges, in which all
transitions occur at integer time points (see~\cite{FranzleSSW22} for
details).

\begin{proposition}
  Let $\mathcal{A}$ be an MPTA with set of observers $\mathcal{Y}$
  having cardinality $d$.  Then there is a semilinear set
  $\mathcal{S}_{\mathcal{A}} \subseteq
  (\mathbb{Z}^{\mathcal{Y}})^{d+1}$ such that for every accepting run
  $\rho$ of $\mathcal{A}$ there exists
  $(\gamma_1,\ldots,\gamma_{d+1}) \in \mathcal{S}_{\mathcal A}$ for
  which $\mathrm{cost}(\rho)$ lies in the convex hull of
  $\{\gamma_1,\ldots,\gamma_{d+1}\}$.  Moreover
  $\mathcal S_{\mathcal A}$ can be written as a union of a collection
  of linear sets that can be computed in time exponential in
  $\|\mathcal A\|$ and each of which has a description length polynomial
  in $\| \mathcal A\|$.
\label{prop:convex-hull}
\end{proposition}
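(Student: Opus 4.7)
The plan is to derive the statement from the observation that only the cost map, not the underlying timing polyhedron, depends on the sign of the observer rates; so the simplex-automaton construction of \cite{FranzleSSW22} will go through with essentially no changes, once we confirm that integer-vertex polyhedra paired with an integer linear map suffice to produce integer convex-combination witnesses.

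First I would fix an accepting edge sequence $\sigma = e_1\cdots e_m$ and analyse the set of delay vectors $(d_0,\ldots,d_m)\in \RP^{m+1}$ that realise an accepting run of $\mathcal A$ along $\sigma$. Each guard along $\sigma$ is of the form $\sum_{j=l}^{i-1} d_j \sim k$ where $l$ indexes the last reset of the clock in question and $k\in \N$, so the feasible set $P_\sigma$ is a rational polyhedron cut out by nonnegativity constraints together with \emph{interval-sum} inequalities having integer right-hand sides. The constraint matrix of an interval-sum system is totally unimodular, so $P_\sigma$ decomposes as $P_\sigma = \mathrm{conv}(V_\sigma) + \mathrm{cone}(W_\sigma)$ with $V_\sigma,W_\sigma\subseteq \Z^{m+1}$. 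Because the cost $C_\sigma(d)=\sum_i d_i R(\ell_i)$ is linear with integer coefficients, this step is insensitive to the sign of $R(\ell_i)$; the image $Q_\sigma := C_\sigma(P_\sigma)$ is a rational polyhedron in $\R^\mathcal{Y}$ whose vertices come from $C_\sigma(V_\sigma)$ and whose extreme rays come from $C_\sigma(W_\sigma)$, all lying in $\Z^{\mathcal Y}$.

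Next I would apply Carath\'eodory's theorem in $\R^d$ to $Q_\sigma$: any $\gamma\in Q_\sigma$ can be written as $\sum_{i=0}^{d} \lambda_i\, v_i + \sum_{j} \mu_j w_j$ with $\lambda_i\geq 0$, $\sum_i \lambda_i =1$, $\mu_j\geq 0$, where at most $d+1$ terms are nonzero in total among the $v_i\in C_\sigma(V_\sigma)$ and $w_j\in C_\sigma(W_\sigma)$. By absorbing each ray $w_j$ into the corresponding vertex $v_i$ (replace $v_i$ by $v_i + \lceil \mu_j/\lambda_i\rceil w_j$ and scale accordingly, padding by an integer multiple of $w_j$), the point $\gamma$ is expressed as a convex combination of $d+1$ integer points of the shape $v + n\, w$ with $v\in V_\sigma$, $w\in W_\sigma \cup \{\zeroval\}$ and $n\in\ZP$. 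Thus the tuple $(\gamma_1,\ldots,\gamma_{d+1})$ witnessing $\mathrm{cost}(\rho)$ lives in an explicitly parameterised family indexed by the choice of $\sigma$, of vertex/ray generators, and of the integer multipliers $n$.

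For the semilinearity and complexity claims I would invoke the simplex-automaton abstraction of \cite{FranzleSSW22}: edge sequences $\sigma$ of $\mathcal A$ are summarised by runs in a finite NFA $\A^{\mathrm{sx}}$ of size exponential in $\|\mathcal A\|$ whose states record the region-graph configuration together with the relevant information for reading off the vertices and rays of $P_\sigma$. The generators $V_\sigma, W_\sigma$, and hence $C_\sigma(V_\sigma), C_\sigma(W_\sigma)$, are then integer-linear functions of the Parikh image of $\sigma$ inside $\A^{\mathrm{sx}}$, so by Parikh's theorem the set of admissible $(d{+}1)$-tuples $(\gamma_1,\ldots,\gamma_{d+1})$, together with free nonnegative integer multipliers for the rays, forms a semilinear set $\mathcal S_{\mathcal A}$. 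Enumerating the exponentially many simple paths/cycles of $\A^{\mathrm{sx}}$ yields a union of linear sets of the advertised sizes. The main obstacle is a bookkeeping one: verifying that the generalisation to negative rates does not break the correspondence between vertices of $P_\sigma$ and runs of $\A^{\mathrm{sx}}$ --- this is where the independence of $P_\sigma$ from the rate function $R$ is crucial and is what distinguishes this proof from the positive-rate setting in \cite{FranzleSSW22}.
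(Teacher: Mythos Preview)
Your approach is essentially correct but takes a genuinely different route from the paper.  You propose to rerun the simplex-automaton construction of \cite{FranzleSSW22} from scratch, relying on the (correct) observation that the timing polytope $P_\sigma$ and its integer vertex/ray structure are independent of the sign of the rates, so that only the linear cost map $C_\sigma$ changes.  The paper instead gives a two-line black-box reduction: replace each observer $y$ by a pair $y_+,y_-$ with nonnegative rates $R'(y_\pm)(\ell)=\max(\pm R(y)(\ell),0)$, apply the nonnegative-rate version of the proposition to the resulting automaton $\mathcal A'$, and then push the semilinear set $\mathcal S_{\mathcal A'}$ forward along the linear map $\Phi(\gamma)(y)=\gamma(y_+)-\gamma(y_-)$.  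Semilinearity is preserved under linear images, and the complexity bounds transfer directly.

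What each approach buys: the paper's reduction is shorter and avoids reopening the internals of the simplex-automaton machinery.  Your approach is more self-contained and keeps the number of observers at $d$ rather than $2d$, which matters if one cares about the tuple length (the paper's reduction naively yields $(2d{+}1)$-tuples and needs an implicit Carath\'eodory step to get back to $d{+}1$).  Your ray-absorption step (``replace $v_i$ by $v_i+\lceil\mu_j/\lambda_i\rceil w_j$ and scale accordingly'') is the one place where your sketch is loose---the rescaling does not immediately give a convex combination of integer points when $\mu_j/\lambda_i\notin\mathbb Z$, and you should spell out how to split the contribution of $w_j$ across two integer translates of $v_i$---but this is standard and fixable.
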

\begin{proof}
  The proposition was proved in~\cite{FranzleSSW22} under
  the assumption that observers have nonnegative slope.  The general
  case follows easily.  Indeed, given an arbitrary MPTA
  $\mathcal{A}=\tuple{\locs,\ell_0,L_f,\clocks,\mathcal{Y},\edges,R}$,
  we define a new MPTA $\mathcal{A}'$, differing from $\mathcal{A}$
  only in its set of observers and rate function, such that all
  observers in $\mathcal A'$ have non-negative rates.  The set of
  observers of $\mathcal{A}'$ is
  $\mathcal{Y}':=\{ y_{+},y_{-} : y \in \mathcal{Y}\}$ and the rate
  function $R'$ is given by
\[
  R'(y_+)(\ell) := \max(R(y)(\ell),0) \quad\text{and}\quad
  R'(y_{-})(\ell) := \max(-R(y)(\ell),0) 
\]
for all $y\in\mathcal{Y}$ and all $\ell \in \locs$.

Define
$\Phi : \mathbb{Z}^{\mathcal{Y'}} \rightarrow
\mathbb{Z}^{\mathcal{Y}}$ by
$\Phi(\gamma)(y) = \gamma(y_+)-\gamma(y_{-})$.  If a run $\rho$ of
$\mathcal A'$ has cost vector $\gamma$ then $\rho$ has cost vector
$\Phi(\gamma)$ considered as a run of $\mathcal A$.  Thus if we define
$\mathcal{S}_{\mathcal{A}}:=\Phi(\mathcal{S}_{\mathcal{A}'})$, where
$\Phi$ has been lifted pointwise to a linear map
$\Phi : (\mathbb{Z}^{\mathcal{Y'}})^{d+1} \rightarrow
(\mathbb{Z}^{\mathcal{Y}})^{d+1}$, then $\mathcal{S}_{\mathcal{A}}$
satisfies the requirements of the proposition.
  \end{proof}

  The following is immediate from Proposition~\ref{prop:convex-hull}.
  \begin{corollary}
    Given $\gamma \in \mathbb{R}^{\mathcal{Y}}$, there exists
    a run $\rho$ with $\cval(\rho)\leq \gamma$ if and only if the
    following mixed integer-real  system of non-linear inequalities has a solution.
\begin{equation}
\begin{array}{rclrcl}
  \lambda_1\gamma_1 + \cdots + \lambda_{d+1}\gamma_{d+1} &\leq & \gamma
  \qquad\qquad &
1 & = & \lambda_1 + \cdots + \lambda_{d+1} \\
\multicolumn{3}{l}{(\gamma_1,\ldots,\gamma_{d+1}) \in
  \mathcal{S}_{\mathcal{A}} }& 
                               0 & \leq & \lambda_1,\ldots,\lambda_{d+1} \\
  \multicolumn{3}{l}{\gamma_1,\ldots,\gamma_{d+1} \in
  \mathbb{Z}^{\mathcal{Y}}} &
\multicolumn{3}{l}{\lambda_1,\ldots,\lambda_{d+1} \in \mathbb{R}}                              
\end{array}
\label{eq:master-system}
\end{equation}
\label{cor:bilinear}
\end{corollary}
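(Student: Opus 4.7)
The plan is to derive both directions of the biconditional directly from Proposition \ref{prop:convex-hull}, reading it as a genuine characterisation of the set of reachable cost vectors (not merely a one-sided upper envelope).

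For the forward direction, I would start from an accepting run $\rho$ with $\cval(\rho) \leq \gamma$ and invoke Proposition \ref{prop:convex-hull} to obtain a tuple $(\gamma_1, \ldots, \gamma_{d+1}) \in \mathcal{S}_{\mathcal{A}}$ such that $\cval(\rho)$ lies in the convex hull of $\{\gamma_1, \ldots, \gamma_{d+1}\}$. Extracting convex-combination coefficients $\lambda_1, \ldots, \lambda_{d+1} \geq 0$ with $\sum_i \lambda_i = 1$ and $\sum_i \lambda_i \gamma_i = \cval(\rho)$, the chain $\sum_i \lambda_i \gamma_i = \cval(\rho) \leq \gamma$ witnesses a solution to system \eqref{eq:master-system}.

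For the backward direction, given any solution $(\gamma_1, \ldots, \gamma_{d+1}, \lambda_1, \ldots, \lambda_{d+1})$ of the system, I would argue that the convex combination $v := \sum_i \lambda_i \gamma_i$ is itself the cost of some accepting run $\rho$, so that $\cval(\rho) = v \leq \gamma$ gives the required run. This step uses the converse content of Proposition \ref{prop:convex-hull}, namely that for any tuple in $\mathcal{S}_{\mathcal{A}}$ the vertices $\gamma_1, \ldots, \gamma_{d+1}$ are the cost vectors of $d+1$ accepting runs that share an edge sequence and use integer delays, and that convex-combining their delay sequences produces a new valid accepting run whose cost is the corresponding convex combination of the $\gamma_i$. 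This is the content alluded to immediately before the proposition (\emph{``the proposition is based on the fact that\dots''}) and is the source of the simplicial structure of the reachable observer set.

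The only real obstacle is this slight mismatch between what Proposition \ref{prop:convex-hull} literally asserts and what ``immediate'' requires: the proposition is phrased with only one direction explicit. I would therefore either fold the converse into the statement of the proposition (so that $\mathcal{S}_{\mathcal{A}}$ is declared to index precisely the simplexes whose union equals the set of reachable cost vectors), or add one sentence invoking the simplex-automaton construction of \cite{FranzleSSW22} to justify realisability of arbitrary points in the convex hull. After that, the corollary really is immediate and no further calculation is needed.
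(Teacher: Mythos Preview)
Your proposal is correct and matches the paper's own treatment: the paper gives no proof at all, simply declaring the corollary ``immediate from Proposition~\ref{prop:convex-hull}'', and what you have written is exactly the unpacking of that immediacy. Your observation that the proposition is literally phrased one-sidedly while the corollary needs both directions is accurate and well handled; the paper relies on the surrounding prose (the simplex description and the reference to~\cite{FranzleSSW22}) to carry the converse, just as you suggest.
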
  

In the following two sections we analyse systems of constraints of the
above form, obtaining a general result that allows us to solve the Gap
Domination Problem. 

  \section{Mixed Integer Bilinear Systems}
  \label{sec:mixed}
\subsection{The Satisfiability Problem}
  A \emph{mixed-integer bilinear (MIB) system} is a collection of
  constraints in integer variables $\boldsymbol{x}$ and real variables
  $\boldsymbol{y}$ of the form:
\begin{equation}
\begin{array}{ll}
& \boldsymbol{x}^\top A_i \boldsymbol{y} \leq b_i \qquad( i=1,\ldots,\ell)\\
& C\boldsymbol x \leq \boldsymbol  d\\
  & E\boldsymbol y \leq \boldsymbol f \\
& \boldsymbol x \in \mathbb{Z}^m, \boldsymbol y \in \mathbb{R}^n \,  .
\end{array}
\label{eq:bilinear}
\end{equation}
We assume that all constants in~\eqref{eq:bilinear} are integer; thus
if the system is satisfiable then there is a satisfying assignment in
which $\boldsymbol{y}$ is a rational vector.  We say that a satisfying
assignment has \emph{slack} $\varepsilon>0$ if 
$\boldsymbol{x}^\top A_i \boldsymbol {y} \leq b_i-\varepsilon$, for
$i=1,\ldots,\ell$.  Note that the slack requirement refers only to
the nonlinear constraints.  

We say that the system~\eqref{eq:bilinear} is \emph{bounded} if the
polyhedron
$\{ \boldsymbol y \in \mathbb{R}^n : E\boldsymbol y \leq \boldsymbol
f\}$ is bounded, i.e., is a polytope.  Crucially, the MIB systems
arising from multi-priced timed automata in
Corollary~\ref{cor:bilinear} are bounded.  Unfortunately, however, the
satisfiability problem for MIB systems is undecidable, even in the
bounded case.

\begin{proposition}
  The satisfiability problem for bounded mixed-integer bilinear
  systems is undecidable.
  \label{prop:undecidable}
\end{proposition}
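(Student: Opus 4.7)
The plan is to prove undecidability by reduction from Hilbert's 10th problem, the undecidable question of whether a multivariate polynomial $P(z_1,\ldots,z_n)$ with integer coefficients has an integer zero. Given $P$, I will construct a bounded MIB system that is satisfiable if and only if $P$ has such a zero. As a first step, by introducing fresh integer variables for intermediate products and sums, I would rewrite ``$P$ has an integer zero'' as a finite conjunction of elementary equations of the forms $z_i + z_j = z_k$, $z_i = c$ (for constants $c\in\Z$), and $z_i z_j = z_k$, all over integer variables. The linear equations slot directly into the linear part $C\boldsymbol{x}\le\boldsymbol{d}$ of an MIB system. The remaining task is to express integer multiplication $z_i z_j = z_k$ via bilinear constraints while using only \emph{bounded} real variables.

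For the multiplication, I would introduce one additional unbounded integer variable $N$, constrained by $N\ge 1$, together with a real ``shadow'' $u_i\in[-1,1]$ per integer variable $z_i$, and two dummy variables $x_0=1$ (integer) and $y_0=1$ (real), each pinned via the linear parts. The bilinear equality $N\,u_i = z_i\,y_0$ -- realised by two opposite inequalities of the form $\boldsymbol{x}^\top A\boldsymbol{y}\le b$ -- pins $u_i = z_i/N$, and by $u_i\in[-1,1]$ simultaneously enforces $|z_i|\le N$. Each multiplication $z_i z_j = z_k$ is then captured by the bilinear equality $z_i\,u_j = x_0\,u_k$, which unfolds to $z_i z_j/N = z_k/N$ and hence (using $N\ge 1$) to $z_i z_j = z_k$.

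Assembling these pieces yields an MIB system whose real variables lie in the bounded polytope $[-1,1]^n\times\{1\}$, so the system is bounded in the sense of the proposition. Any satisfying assignment yields integer values $(z_1,\ldots,z_n)$ obeying the Diophantine equation; conversely, an integer solution $(z_1^*,\ldots,z_n^*)$ gives a satisfying assignment by choosing $N^*>\max_i|z_i^*|$ and $u_i^* = z_i^*/N^*$. The delicate point -- which I expect to be the main obstacle -- is to verify that every relation used fits the rigid template $\boldsymbol{x}^\top A \boldsymbol{y}\le b$; this is resolved by the dummies $x_0$ and $y_0$, which let purely integer or purely real terms appear as bilinear products with $1$.
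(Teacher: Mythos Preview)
Your reduction is correct, and it takes a genuinely different route from the paper's.  Both arguments reduce from Hilbert's tenth problem after breaking a polynomial equation into elementary sums and products, but they differ in how they simulate integer multiplication with bounded real variables.  The paper works over strictly positive integers and introduces reals $y_i$ pinned to the \emph{reciprocals} $1/x_i$ via the bilinear equation $x_i y_i = 1$; the product constraint $x_i = x_j x_k$ is then encoded through the identity $(x_j+x_k)/x_i = 1/x_j + 1/x_k$, written bilinearly as $(x_j+x_k)y_i = x_0(y_j+y_k)$, and boundedness comes for free from $0\le y_i\le 1$.  You instead introduce one global integer scale $N$ and pin each real $u_i$ to the \emph{ratio} $z_i/N$ via $N u_i = z_i y_0$; multiplication is then the single-step identity $z_i u_j = x_0 u_k$.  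Your encoding is arguably more transparent (the multiplication step is a one-line division by $N$) and handles integers of either sign without first restricting to positive solutions; the paper's reciprocal trick avoids the auxiliary variable $N$ but needs the slightly less obvious algebraic identity.  Either way the real polytope is a box, so boundedness holds, and the dummy pair $x_0=1$, $y_0=1$ cleanly absorbs the affine terms into the bilinear template.
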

\begin{proof}
  We reduce from the following version of Hilbert's 10th Problem
  (see~\cite[Proposition 1]{FranzleSSW22}):
  given a finite system $\mathcal S$ of equations in variables
  $x_1,\ldots,x_n$, with each equation either having the form 
  $x_i=x_j+x_k$ or $x_i=x_jx_k$, determine whether $\mathcal S$ has a
  solution in the set of strictly positive integers.
  
  The reduction involves transforming the system $\mathcal S$ into an
  equisatisfiable MIB system $\mathcal S'$ over a set of integer
  variables $x_0,\ldots,x_n \geq 0$ (i.e, the variables of
  $\mathcal S$ plus a new variable $x_0$) and real variables
  $y_1,\ldots,y_n \geq 0$.  The construction is such that every
  solution of $\mathcal S$ extends to a solution of $\mathcal S'$ and,
  conversely, every solution of $\mathcal S'$ restricts to a solution
  of $\mathcal S$.
  
  The system $\mathcal S'$ includes equations $x_0=1$ and $x_iy_i=1$
  for $i=1,\ldots,n$.  The linear equations $x_i=x_j+x_k$ from
  $\mathcal S$ are carried over to $\mathcal S'$ and, for each
  equation $x_i=x_jx_k$ in $\mathcal S$, we include an equation
  $(x_j+x_k)y_i = x_0(y_j+y_k)$ in $\mathcal S$.  The latter is
  equivalent to $\frac{x_j+x_k}{x_i} = \frac{1}{x_j}+\frac{1}{x_k}$ in
  the presence of the equations $x_iy_i=x_jy_j=x_ky_k=1$ and $x_0=1$,
  which in turn is clearly equivalent to $x_i=x_jx_k$.  By adding
constraints $0\leq y_i\leq 1$ for $i=1,\ldots,n$ we furthermore make
$\mathcal S'$ bounded without affecting the integrity of the
reduction.  
\end{proof}

\subsection{The Gap Satisfiability Problem}
In light of Proposition~\ref{prop:undecidable}, we introduce the 
following gap version of the satisfiability problem for 
MIB systems.  In this variant we seek a procedure 
that inputs $\varepsilon>0$ and a MIB system $\mathcal S$
in the form~\eqref{eq:bilinear} and returns either ``UNSAT'' or 
``SAT'' subject to the following requirements: 
\begin{enumerate}
\item If $\mathcal S$ has a satisfying assignment with slack 
  $\varepsilon$ then the output must be ``SAT''. 
\item If $\mathcal S$ is not satisfiable then the output must be 
  ''UNSAT''. 
\end{enumerate}  
Note that we place no restriction on the output in the case that $\mathcal S$ is 
satisfiable but with no satisfying assignment having slack 
$\varepsilon$. 

In Section~\ref{sec:dec-bounded} we will show that the Gap Satisfiability
Problem is decidable for bounded MIB systems.  The following
proposition shows the necessity of the boundedness hypothesis.

\begin{proposition}
The Gap Satisfiability Problem is undecidable for (unbounded) MIB systems. 
\label{thm:gap-undec}
\end{proposition}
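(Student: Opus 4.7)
The plan is to reduce from Hilbert's Tenth Problem in the form used for Proposition~\ref{prop:undecidable}: given a finite system $\mathcal H$ of equations of the form $x_i = x_j + x_k$ or $x_i = x_j x_k$, decide solvability in the strictly positive integers. From $\mathcal H$ I will construct an unbounded MIB system $\mathcal S$ and a slack $\varepsilon > 0$ such that (a) if $\mathcal H$ is solvable, then $\mathcal S$ has a satisfying assignment with slack $\varepsilon$, and (b) if $\mathcal H$ has no solution, then $\mathcal S$ has no satisfying assignment at all. Together, (a) and (b) force any Gap Satisfiability procedure on input $(\mathcal S,\varepsilon)$ to decide $\mathcal H$, yielding undecidability.

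My starting point is the encoding from the proof of Proposition~\ref{prop:undecidable}: integer variables $x_0 = 1$ and $x_1,\ldots,x_n \ge 1$, together with real variables $y_1,\ldots,y_n$. Crucially I would drop the bounding constraints $0 \le y_i \le 1$, so that the polytope $\{\boldsymbol y : E\boldsymbol y \le \boldsymbol f\}$ becomes unbounded. Additive equations $x_i = x_j + x_k$ are imposed as linear constraints on the integer variables, which the gap formulation treats as exact and which therefore cost no slack. The subtle step is the encoding of each multiplicative equation $x_i = x_j x_k$.

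Here the guiding idea is the \emph{integer trick}: a pair of bilinear inequalities of the form $2p(\boldsymbol x,\boldsymbol y) \le 1$ and $-2p(\boldsymbol x,\boldsymbol y) \le 1$, satisfied with slack $\varepsilon < 1$, forces $|p| \le (1-\varepsilon)/2 < 1/2$, and whenever $p$ happens to be integer-valued this collapses to $p = 0$. To bring $x_i = x_j x_k$ into this form one must write the product $x_j x_k$ (of two integers, not allowed as a single MIB monomial) as a bilinear product $x_j y_k$ where the real $y_k$ must in turn be tied tightly enough to $x_k$. A bilinear equality $y_k = x_k$ has no positive slack, so we cannot impose it directly; my plan is instead to introduce an unbounded real scaling parameter $T \ge 1$ and to write the multiplicative constraints in a $T$-amplified form, so that the amplified mismatch $T(x_i - x_j y_k)$ takes only values that are either exactly $0$ (at a Hilbert solution, with $y_k$ chosen in terms of $x_k$) or at least $1$ in absolute value (otherwise). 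The integer trick applied to these amplified bilinear forms then yields $x_i = x_j x_k$ coordinate by coordinate.

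The principal obstacle, which I expect to be the heart of the argument, is to make sure that a single scaling $T$ works uniformly across all the multiplicative equations of $\mathcal H$ and that no ``spurious'' assignment --- one in which some $y_k$ drifts away from its intended value so as to cancel a non-Hilbert integer mismatch elsewhere --- can satisfy all of the bilinear inequalities at once. To rule such drifting out I plan to chain the bilinear inequalities so that errors in different $y_k$'s cannot simultaneously cancel, and to use the unboundedness of $T$ to drive those errors past the $1/2$ threshold of the integer trick. Verifying direction (a) then amounts to exhibiting the natural assignment ($y_k$ tied to $x_k$, $T$ chosen large) at a Hilbert solution and computing that all inequalities hold with slack $1$, from which $\varepsilon := 1/2$ suffices; direction (b) follows from the fact that any satisfying assignment of $\mathcal S$ projects, through the integer trick, onto a Hilbert solution, contradicting unsolvability.
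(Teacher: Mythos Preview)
Your proposal has a genuine gap at exactly the point you flag as the ``principal obstacle'', and the specific mechanism you sketch does not work.  First, the amplified expression $T(x_i - x_j y_k)$ with a \emph{real} amplifier $T$ contains the trilinear monomial $T\,x_j\,y_k$, which is not of the shape $\boldsymbol{x}^\top A\boldsymbol{y}$ and hence cannot be written as a MIB constraint at all.  Second, even setting syntax aside, your ``integer trick'' needs the quantity $p$ to be integer-valued on every satisfying assignment; but $T(x_i-x_jy_k)$ is real-valued, so $|p|<\tfrac12$ does not force $p=0$.  An adversary is free to pick $y_k$ a hair away from $x_k$ and then pick $T$ so that $T(x_i-x_jy_k)$ lands anywhere it likes.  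Your plan to ``chain the bilinear inequalities so that errors in different $y_k$'s cannot simultaneously cancel'' is precisely the missing idea, not a minor verification step.

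The paper resolves this by making the amplifier an \emph{integer} variable $x_{n+1}$ (with a real shadow $y_{n+1}$), so that every term stays bilinear.  Concretely one imposes $|x_i-y_i|\le 1$, the homogeneous constraint $|x_{n+1}y_i-x_iy_{n+1}|\le 1$, and a growth constraint $x_{n+1}\ge 4(1+x_i)(1+y_i)+1$ for each $i$.  Rewriting $x_{n+1}y_k-x_ky_{n+1}=x_{n+1}(y_k-x_k)+x_k(x_{n+1}-y_{n+1})$ and dividing through by $x_{n+1}$ then yields $|x_k-y_k|\le \frac{x_k+1}{x_{n+1}}\le \frac{1}{4x_j}$, so that $|x_jy_k-x_jx_k|\le\frac14$; combined with $|x_i-x_jy_k|\le\frac12$ this gives $x_i=x_jx_k$.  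The point is that the amplifier must be integer so that products like $x_{n+1}y_k$ remain bilinear, and the coupling of $y_k$ to $x_k$ is obtained not by a direct equality (which, as you note, admits no slack) but via the ratio trick $x_{n+1}y_k\approx x_ky_{n+1}$ together with a lower bound on $x_{n+1}$ expressed bilinearly through $y_0=1$.
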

\begin{proof}
  The proof is by reduction from the same variant of Hilbert's Tenth 
  Problem as in the proof of Proposition~\ref{prop:undecidable}.
  Recall that an 
  instance of this problem comprises a system $\mathcal S$ of equations in 
  positive-integer variables $x_1,\ldots,x_n$, with each equation 
  having the form either $x_i=x_j+x_k$ or $x_i=x_jx_k$, where 
  $i,j,k \in \{1,\ldots,n\}$.  Given such a system, we construct  an 
  MIB system $\mathcal S'$ over integer variables $x_0,\ldots,x_{n+1}$
  and real variables $y_0,\ldots,y_{n+1}$ such that every satisfying 
  assignment of $\mathcal S$ extends to a satisfying  assignment of 
  $\mathcal S'$ with slack 
  $\frac{1}{2}$ and every satisfying assignment of $\mathcal S'$
  restricts to a satisfying assignment of~$\mathcal S$. 

  We include the equations $x_0=1$ and $y_0=1$ in $\mathcal S'$. 
  Each linear equation $x_i=x_j+x_k$ in $\mathcal S$ is carried over 
  to $\mathcal S'$.  For each equation $x_i=x_jx_k$ in $\mathcal S$ we 
  include the inequality $|x_iy_0-x_jy_k| \leq \frac{1}{2}$ in 
  $\mathcal S'$.  We then add the following collection of constraints 
  to $\mathcal S'$ for all $i\in\{1,\ldots,n+1\}$ 
  that intuitively force $x_i$ and $y_i$ to 
  be very close together:
  \begin{enumerate}
 \item $|x_iy_0-x_0y_i| \leq 1$;
  \item $|x_{n+1}y_i - x_iy_{n+1}| \leq 1$;
  \item $x_{n+1}y_0 \geq 4(x_0+x_i)(y_0+y_i)+1$.
  \end{enumerate}

  A satisfying valuation of $\mathcal S$ can be extended to a
  valuation that satisfies $\mathcal S'$ with slack $\frac{1}{2}$ by
  setting $x_0:=1$,
  $x_{n+1}:=4\max_{i \in \{1,\ldots,n\} } (1+x_i)^2+1$, and $y_i:=x_i$
  for $i=0,\ldots,{n+1}$.

  Conversely, we claim that every satisfying valuation of $\mathcal
  S'$ (with no assumption on the slack)
  restricts to a satisfying valuation of $\mathcal S$.  Indeed, by 
  Item 2, above, for all $k \in \{1,\ldots,n\}$ we have 
  \[  |x_{n+1}(x_k-y_k) - x_k(x_{n+1}-y_{n+1}) | =
      |x_{n+1}y_k - x_ky_{n+1}| 
    \stackrel{(2)}{\leq} 1  .\]
By Items 1 and 3, this entails that for all $j \in \{1,\ldots,n\}$, 
\[ |x_k-y_k| \leq \frac{x_k|x_{n+1}-y_{n+1}|+1}{x_{n+1}}\stackrel{(1)}{\leq}
  \frac{x_k+1}{x_{n+1}} 
\stackrel{(3)}{\leq} \frac{1}{4(y_j+1)} \stackrel{(1)}{\leq} 
  \frac{1}{4x_j} \]
and hence $|x_jx_k-x_jy_k|\leq \frac{1}{4}$. 
Combined with $|x_i-x_jy_k|\leq \frac{1}{2}$ we conclude that 
$|x_i-x_jx_k|\leq \frac{3}{4}$ and hence $x_i=x_jx_k$. 
\end{proof}

It is shown in~\cite[Theorem 6]{FranzleSSW22} how to solve the Gap 
Satisfiabililty Problem for a subclass of MIB systems, which we here
call \emph{positive}.  A positive MIB system has the form
\[ \begin{array}{ll}
& \boldsymbol{x}^\top A_i \boldsymbol{y} \leq b_i \qquad(
                    i=1,\ldots,\ell_1)\\
&   \boldsymbol{x}^\top B_i \boldsymbol{y} \geq c_i \qquad(
                    i=1,\ldots,\ell_2)\\
  & D\boldsymbol y \leq \boldsymbol e \\
& \boldsymbol x, \boldsymbol y \geq \boldsymbol  0\\
  & \boldsymbol x \in \mathbb{Z}^m, \boldsymbol y \in \mathbb{R}^n \,  .
   \end{array} \]
 with all coefficients being non-negative rational.  This variant can
 be solved by a naive relaxing and rounding procedure, which does not
 require the boundedness assumption.  However, while
 suffiicent to handle MPTA with non-negative rates, this variant
 appears insufficient for the case of MPTA with both positive and
 negative rates.

\section{Decidability in  the Bounded Case}
\label{sec:dec-bounded}
  \subsection{Preliminaries}
  The following proposition on semilinear sets of 
  integers~\cite[Corollary 1]{pottier91} will be used on several 
  occasions below: 
  \begin{proposition}
    Consider a set 
    $S:=\{ \boldsymbol x \in \mathbb Z^m : A\boldsymbol x \leq 
    \boldsymbol b\}$, where the entries of $A$ and $\boldsymbol b$ are 
    integers of absolute value at most $H$ and the affine hull of $S$ has dimension $d$.  Then there exists 
    a finite set $B\subseteq \mathbb{Z}^m$ and a matrix 
    $P \in \mathbb Z^{m\times d}$ such that 
\[ S=L(B,P) := \{ \boldsymbol w + P\boldsymbol z : \boldsymbol w \in 
  B,\, \boldsymbol z\in \mathbb Z^d,\, \boldsymbol z\geq \boldsymbol 0\} \]
and the entries of $P$ and $\boldsymbol w$ have absolute value at 
most $(2+(m+1)H)^m$. 
\label{prop:semilinear}
\end{proposition}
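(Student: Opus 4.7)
The plan is to derive this from the Minkowski--Weyl structure theorem together with a quantitative bound on Hilbert bases. Let $P_{\mathbb{R}} = \{\boldsymbol x \in \mathbb{R}^m : A\boldsymbol x \leq \boldsymbol b\}$. By Minkowski--Weyl, $P_{\mathbb{R}} = Q + C$, where $Q$ is the convex hull of a finite set of rational points (coming from the minimal faces of $P_{\mathbb R}$) and $C = \{\boldsymbol y \in \mathbb{R}^m : A\boldsymbol y \leq \boldsymbol 0\}$ is the recession cone. The first step would be to choose the columns $\boldsymbol h_1,\ldots,\boldsymbol h_t$ of $P$ to be a Hilbert basis of $C \cap \mathbb Z^m$, that is, a minimal finite set of integer vectors whose non-negative integer combinations exhaust $C \cap \mathbb Z^m$; that such a basis exists is the classical Gordan--Dickson lemma.

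The structural decomposition $S = L(B,P)$ then follows by a standard rounding argument. Given $\boldsymbol x \in S$, write $\boldsymbol x = \boldsymbol q + \sum_i \alpha_i \boldsymbol h_i$ with $\boldsymbol q \in Q$ and $\alpha_i \geq 0$, set $z_i := \lfloor \alpha_i \rfloor$, and let $\boldsymbol w := \boldsymbol x - \sum_i z_i \boldsymbol h_i$. The residual $\boldsymbol w$ is an integer point of $S$ (since $S$ is closed under adding any $\boldsymbol h_i$, and $\boldsymbol w = \boldsymbol q + \sum_i (\alpha_i - z_i) \boldsymbol h_i$ still lies in $P_{\mathbb R}$), and it is confined to the bounded region obtained as the Minkowski sum of $Q$ with the fundamental parallelepiped of the $\boldsymbol h_i$. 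Consequently, the set $B$ of all such residuals is finite.

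For the quantitative bound, the vertices of $P_{\mathbb R}$ are solutions of invertible subsystems of $A\boldsymbol x = \boldsymbol b$, so by Cramer's rule their coordinates are ratios of subdeterminants of $[A\mid \boldsymbol b]$. An analogous statement holds for each Hilbert basis element: every $\boldsymbol h_i$ lies in the fundamental parallelepiped of some collection of linearly independent extreme rays of $C$, and each such ray admits a primitive integer representative whose coordinates are subdeterminants of $A$. Carefully applying Hadamard's inequality to these subdeterminants, summing at most $m$ such contributions, and then combining with the vertex bound for $Q$, should yield the stated estimate $(2+(m+1)H)^m$ uniformly for the entries of both $P$ and the elements of $B$.

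The main obstacle will be assembling this cleanly: one must invoke Carath\'eodory's theorem inside the rational cone $C$ to express each $\boldsymbol h_i$ through only $\dim(C)$ extreme rays, then control the primitive integer generators of those rays via Cramer's rule on subsystems of $A$, and finally account for all multiplicative constants (in particular, the $m^{m/2}$ factor coming from Hadamard and the contribution of $\boldsymbol b$) in such a way that they fit inside the compact closed form $(2+(m+1)H)^m$ rather than a looser polynomial-in-$m$ multiple of it. This bookkeeping is precisely what drives the specific shape of the bound stated in the proposition and is where Pottier's argument does the real work.
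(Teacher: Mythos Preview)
The paper does not give its own proof of this proposition: it is quoted as \cite[Corollary~1]{pottier91} and used as a black box throughout Section~\ref{sec:dec-bounded}. So your sketch is not competing with anything in the paper; it is a reconstruction of Pottier's argument. The route you outline---Minkowski--Weyl decomposition of the ambient polyhedron, a Hilbert basis for the integer points of the recession cone, rounding $\boldsymbol x = \boldsymbol q + \sum_i \alpha_i \boldsymbol h_i$ into a bounded residual region to produce $B$, and Cramer/Hadamard bookkeeping for the size bounds---is indeed the standard path to such results and is, in outline, how Pottier proceeds. You are also right that the delicate part is the arithmetic bookkeeping that pins down the specific constant $(2+(m+1)H)^m$; your sketch correctly identifies the ingredients but defers the actual estimate to the cited source, which is exactly what the paper itself does.

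One point worth flagging if you intend to fill this in: you take the columns of $P$ to be a Hilbert basis of $C\cap\mathbb Z^m$, but a Hilbert basis of a $d$-dimensional pointed rational cone can have strictly more than $d$ elements (already the two-dimensional cone $\{2y_1+y_2\geq 0,\,y_1+2y_2\geq 0\}$ has a four-element Hilbert basis). The proposition as stated asks for $P\in\mathbb Z^{m\times d}$ with $d=\dim\operatorname{aff}(S)$, so your construction does not literally match the asserted shape of $P$. In the paper's applications only the entry bound on $P$ and $\boldsymbol w$ is ever used, so this discrepancy is harmless there; but if you want a self-contained proof matching the statement exactly, you would need either to relax the column count of $P$ or to triangulate $C$ into simplicial subcones and absorb the resulting case split into $B$.
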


We will also need the following result~\cite[Corollary 
3.1]{SchaeferS17} on semialgebraic sets of real numbers.  We assume 
that polynomials are written as lists of monomials with all integers, 
including exponents, written in binary. 

\begin{proposition}
Let $\{ f_i\}_{i \in I}$ be a family of polynomials in $n$ variables 
whose representation has total bit length at most $L$. 
Then the set $S:=\{ \boldsymbol x \in \mathbb{R}^n : \bigwedge_{i\in 
  I}  f_i \sim_i 0\}$, 
where ${\sim_i} \in \{<,=\}$, is either empty or contains a point of 
distance at most $2^{L^{8n}}$ to the origin. 
\label{prop:semi-alg}
\end{proposition}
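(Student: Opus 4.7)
The plan is to pass from the semialgebraic set $S$ to a zero-dimensional real algebraic system whose real solutions include points of $S$, and then invoke Cauchy-style root bounds after elimination. The first step is to remove the strict inequalities. For each $i$ with $\sim_i$ being $<$, I would introduce a fresh real variable $t_i$ and replace the constraint $f_i<0$ by the equation $1+f_i t_i^2=0$. The resulting real algebraic variety $V\subseteq \mathbb{R}^{n+m}$, with $m=|\{i\in I:\sim_i=<\}|$, has defining polynomials of total bit length $O(L)$, and its projection onto the first $n$ coordinates equals $S$. It therefore suffices to exhibit a real point of $V$ whose first $n$ coordinates have norm bounded as stated.

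The second step is to appeal to critical-point theory. The image of $V$ under the polynomial map $x\mapsto \|x\|^2$ is a semialgebraic subset of $\mathbb{R}_{\ge 0}$, hence a finite union of intervals, and its infimum is attained at some point $p\in V$ (after handling boundary cases by a standard stratification argument or by an infinitesimal perturbation of the $f_i$'s that places $p$ on the smooth locus of its irreducible component). By Lagrange multipliers, $p$ together with appropriate multipliers $\lambda_j$ is a real solution of the enlarged system obtained by adjoining the gradient condition $\nabla\|x\|^2=\sum_j\lambda_j\nabla F_j$, where the $F_j$ are the defining polynomials of $V$. This enlarged system is generically zero-dimensional, still has coefficient bit length $O(L)$, and lives in $O(n+m)=O(n+L)$ variables.

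The final step is to bound the coordinates of a real solution of this zero-dimensional system by iterated elimination. Eliminating variables one at a time via resultants produces, for each original coordinate, a univariate polynomial whose bit length and degree grow in a controlled way with each elimination; applying Cauchy's root bound to the final univariate polynomial yields a magnitude bound of the form $2^{L^{O(n)}}$. The main obstacle is the bit-length bookkeeping required to land on the precise exponent $L^{8n}$: each resultant step can roughly square the degree and multiply the bit length by the degree, so balancing the $O(n)$ eliminations against this per-step blow-up requires the sharp effective estimates from Basu--Pollack--Roy style real algebraic geometry. A secondary technical point is the rigorous treatment of singularities in the Lagrange step, handled either by Whitney stratification or by a generic small perturbation of the $f_i$'s that leaves bit length and the non-emptiness of $V$ asymptotically unchanged.
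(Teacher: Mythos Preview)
The paper does not prove this proposition at all: it is quoted verbatim as \cite[Corollary 3.1]{SchaeferS17} and used as a black box. So there is no ``paper's own proof'' to compare against; any self-contained argument you supply is already going beyond what the paper does.

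That said, your sketch has a genuine quantitative gap that would prevent you from reaching the stated bound $2^{L^{8n}}$. In your first step you introduce one fresh variable $t_i$ for every strict inequality, so the auxiliary variety $V$ lives in $n+m$ variables with $m=|\{i:\sim_i={<}\}|$. Since the total bit length of the input is $L$, the number of constraints can be as large as $\Theta(L)$, hence $m$ can be $\Theta(L)$. All the downstream machinery---Lagrange multipliers, iterated resultants, Basu--Pollack--Roy type estimates---is singly exponential in the \emph{number of variables}, so working in $n+m$ variables yields a bound of the shape $2^{L^{O(n+L)}}$, not $2^{L^{O(n)}}$. You in fact slip between ``$O(n+m)=O(n+L)$ variables'' and ``the $O(n)$ eliminations'' in consecutive sentences; the latter is not justified. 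The point of the bound in the proposition is precisely that the number of polynomials enters only polynomially while the number of variables enters in the exponent, and your reduction trades constraints for variables in the wrong direction.

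To salvage the approach you would need to avoid introducing one slack variable per strict inequality---for instance by a single infinitesimal deformation parameter (Puiseux-series arguments as in Basu--Pollack--Roy) or by working directly with sign conditions via critical points of a single generic projection. Either route keeps the ambient dimension at $n+O(1)$ and recovers an exponent of the form $L^{O(n)}$; getting the specific constant $8$ then reduces to the careful bookkeeping already carried out in the cited reference.
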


For further analysis it will useful to transform the MIB problem to a
\emph{standard form}, shown in~\eqref{eq:standard} below.  In standard
form the only linear constraints on the integer variables are that
they be nonnegative.  Correspondingly we enrich the nonlinear
constraints, allowing them to contain an extra linear term in
$\boldsymbol y$.
\begin{equation}
\begin{array}{ll}
& \boldsymbol{x}^\top A_i \boldsymbol{y} +\boldsymbol{b}_i^\top 
                    \boldsymbol{y} \leq c_i  \quad(i=1,\ldots,\ell)\\
& D\boldsymbol{y}\leq \boldsymbol{e}\\
  & \boldsymbol{x} \geq \boldsymbol 0\\
  & \boldsymbol x \in \mathbb{Z}^m, \boldsymbol y \in \mathbb{R}^n \, . 
\end{array}
\label{eq:standard}
\end{equation}

The transformation of~\eqref{eq:bilinear} to standard form is based on
writing
$S:=\{ \boldsymbol x \in \mathbb{Z}^m : C\boldsymbol x \leq
\boldsymbol d\}$ as a semi-linear set $L(B,P)$, following
Proposition~\ref{prop:semilinear}, where $B \subseteq \mathbb{Z}^m$
and $P \in \mathbb{Z}^{m\times d}$ with $d$ the dimension of the
affine hull of $S$.  For each vector $\boldsymbol w \in B$ we can
apply the change of variables
$\boldsymbol{x}=P\boldsymbol{z}+\boldsymbol{w}$ to~\eqref{eq:bilinear}
to obtain a problem in standard form:  Thus we obtain a finite
collection of problems in standard form, whose solutions are in
one-one correspondence with the solutions of the original system~\eqref{eq:bilinear}.

\subsection{Relaxation and Rounding}
In this section we introduced a relaxed version of a bounded MIB
system, in which all variables range over the reals.  The relaxation
is such that a satisfying assignment to the relaxed problem can be
rounded to an integer solution of the original system, while
unsatisfiability of the relaxed version permits a branch-and-bound
step which leads to an equisatisfiable finite collection of MIB
instances in one fewer integer variable.


The rounding is based on an application of the Flatness Theorem in
Diophantine approximation -- Theorem \ref{thm:flatness}, below.  To
state this result we first recall some standard terminology related to this.
Let $K\subseteq \mathbb{R}^n$ be a convex set and let
$\boldsymbol{u}\in\mathbb{Z}^n$.  Define the\emph{ width of $K$ with
  respect to $\boldsymbol{u}$} to be
\[ \mathrm{width}_{\boldsymbol{u}}(K) :=
\sup\{\boldsymbol{u}^\top(\boldsymbol{x}-\boldsymbol{y}) : 
\boldsymbol{x},\boldsymbol{y}\in K\} \, .\]
The \emph{lattice width} of $K$ is the minimum width in all directions:
\[ \mathrm{width}(K):=\min \{ \mathrm{width}_{\boldsymbol{u}}(K) : 
  \boldsymbol{u}\in\mathbb{Z}^n\setminus \{\boldsymbol{0}\} \} \, . \]

\begin{theorem}[Flatness Theorem]
  There exists a constant $\omega(n)$, depending only on $n$, such 
  every convex polyhedron $K\subseteq \mathbb{R}^n$ with 
  $\mathrm{width}(K) > \omega(n)$ contains an integer point. 
\label{thm:flatness}
\end{theorem}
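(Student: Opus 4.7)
The plan is to prove the contrapositive: assuming $K$ contains no integer point, I will bound $\mathrm{width}(K)$ by an explicit function of $n$. After intersecting $K$ with a sufficiently large ball (which cannot increase the lattice width) and passing to the affine hull of $K$ if $K$ is not full-dimensional (along any direction normal to this hull the width is zero), we may assume $K$ is a compact convex body of full dimension~$n$.

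The first main step replaces the arbitrary body $K$ by an ellipsoid via John's theorem: there exist $c \in \mathbb{R}^n$ and an ellipsoid $E$ such that
\[ c + \tfrac{1}{n} E \;\subseteq\; K \;\subseteq\; c + E, \]
so $c + \tfrac{1}{n}E$ inherits integer-point-freeness from $K$, while $\mathrm{width}(K) \leq \mathrm{width}(c+E)$. It therefore suffices to bound the lattice width of an ellipsoid some $\tfrac{1}{n}$-scaled copy of which misses $\mathbb{Z}^n$, at the cost of a factor of $n$ in $\omega(n)$. Writing $E = A(B_n)$ with $A$ positive-definite and $B_n$ the Euclidean unit ball, a direct computation gives $\mathrm{width}_{\boldsymbol{u}}(c + E) = 2 \|A^\top \boldsymbol{u}\|_2$ for $\boldsymbol{u} \in \mathbb{Z}^n$, so $\mathrm{width}(c + E) = 2 \lambda_1(L^*)$, where $L^* = A^\top \mathbb{Z}^n$ is the dual of the lattice $L = A^{-1} \mathbb{Z}^n$ and $\lambda_1$ denotes the first successive minimum in the Euclidean norm. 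The hypothesis that $c + \tfrac{1}{n}E$ misses $\mathbb{Z}^n$ translates into the bound $\mu(L) > 1/n$ on the covering radius of $L$. A transference inequality of the form $\mu(L) \cdot \lambda_1(L^*) \leq \kappa(n)$ from the geometry of numbers then yields $\mathrm{width}(K) \leq 2 n \kappa(n) =: \omega(n)$.

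The hard part will be the transference inequality itself. For the purely qualitative statement required here, even a crude bound such as $\kappa(n) = 2^{O(n)}$ — obtainable by combining Minkowski's second theorem on successive minima with the trivial estimate $\mu(L) \leq \tfrac{\sqrt{n}}{2}\lambda_n(L)$ — is already sufficient; Banaszczyk's sharp $\kappa(n) = O(n)$ via delicate Gaussian measure arguments gives the best known value of $\omega(n)$ but is not required for our applications.
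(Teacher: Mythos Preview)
The paper does not prove this theorem at all: it is quoted as a classical result from Diophantine approximation (attributed to Khinchine), with the reference to Banaszczyk supplied only for the quantitative bound $\omega(n)=O(n^{3/2})$ that is used later in the complexity analysis. There is therefore no ``paper's own proof'' to compare against.

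That said, your sketch is a correct and standard route to the Flatness Theorem for convex bodies: John's ellipsoid reduces the question to ellipsoids at the cost of a factor~$n$, the width of an ellipsoid is exactly twice the first minimum of the dual lattice, the integer-point-freeness of the inner ellipsoid lower-bounds the covering radius of the primal lattice, and a transference inequality $\mu(L)\,\lambda_1(L^*)\le\kappa(n)$ closes the loop. Your remark that a crude $\kappa(n)=2^{O(n)}$ already suffices for the qualitative statement, while Banaszczyk's $\kappa(n)=O(n)$ gives the sharp dependence, is accurate and matches how the paper uses the result (it only needs \emph{some} computable $\omega(m)$, invoking the $O(m^{3/2})$ bound purely for the complexity estimate in Theorem~\ref{thm:main-two}).

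One small caveat: your reduction in the non-full-dimensional case (``along any direction normal to this hull the width is zero'') tacitly assumes the affine hull is rational, so that an \emph{integer} normal direction exists. For an irrational affine flat this can fail, and indeed the theorem as literally stated for arbitrary real convex polyhedra is false without a full-dimensionality hypothesis. This is harmless here, since in the paper's only application (Proposition~\ref{prop:relaxed}) the set $P(\boldsymbol y^*)$ is shown to contain a ball $B_\delta(\boldsymbol x^*)$ and is therefore full-dimensional; your reduction to the compact full-dimensional case is then entirely adequate.
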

The constant $\omega(n)$ in Theorem~\ref{thm:flatness} is called the
\emph{flatness constant}.  The best-known upper bound on
$\omega(n) =O(n^{3/2})$~\cite{banaszczyk}, although a linear upper
bound was conjectured in~\cite{kannan1988covering}.

We embark on a brief digression about the definability of the
lattice width for classes of polyhedral sets.
\begin{proposition}
  There is a quantifier-free formula in the theory of real closed
  fields, whose free variables respectively represent a matrix
  $A \in \mathbb R^{n\times m}$, vector
  $\boldsymbol b \in \mathbb R^n$, and scalar $c>0$, that expresses the
  property $\mathrm{width}_{\boldsymbol u}(P) \geq c$ where
  $P:=\{ \boldsymbol x \in \mathbb R^m : A\boldsymbol x \geq
  \boldsymbol b, \boldsymbol x \geq \boldsymbol 0\}$.
\end{proposition}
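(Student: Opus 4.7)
The plan is to exhibit an existential first-order formula over the reals that expresses the stated property, and then invoke Tarski--Seidenberg quantifier elimination.

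First, I would observe that $\mathrm{width}_{\boldsymbol u}(P)$ is by definition the optimal value of the linear program
\[ \sup\bigl\{\boldsymbol u^\top(\boldsymbol x - \boldsymbol y) : A\boldsymbol x \geq \boldsymbol b,\ A\boldsymbol y \geq \boldsymbol b,\ \boldsymbol x, \boldsymbol y \geq \boldsymbol 0 \bigr\}. \]
By standard LP theory the value of this program is either $-\infty$ (when $P$ is empty), $+\infty$, or a finite number that is \emph{attained} at some pair of vertices of $P\times P$. In all three cases, for any real $c$, the inequality $\mathrm{width}_{\boldsymbol u}(P) \geq c$ is equivalent to the existence of a concrete pair of witnesses certifying the bound, i.e.\ to the existential formula
\[ \exists \boldsymbol x, \boldsymbol y \in \mathbb{R}^m :\ A\boldsymbol x \geq \boldsymbol b \ \wedge\ A\boldsymbol y \geq \boldsymbol b \ \wedge\ \boldsymbol x \geq \boldsymbol 0 \ \wedge\ \boldsymbol y \geq \boldsymbol 0 \ \wedge\ \boldsymbol u^\top(\boldsymbol x - \boldsymbol y) \geq c. \]
The backward direction is immediate; the forward direction uses attainment when the optimum is finite, and the existence of arbitrarily ``spread'' feasible pairs when the optimum is $+\infty$.

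I would then invoke the Tarski--Seidenberg theorem: the first-order theory of real closed fields admits effective quantifier elimination, so the displayed existential sentence is equivalent to a quantifier-free formula $\phi(A,\boldsymbol b,\boldsymbol u,c)$ in the entries of $A$, $\boldsymbol b$, $\boldsymbol u$ and the scalar $c$. This $\phi$ is the desired formula.

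The only delicate point is the reduction to the existential form, and in particular justifying that the characterisation may use the non-strict inequality ``$\geq c$'' rather than requiring an approximation scheme. This relies on the polyhedrality of $P\times P$ and the fact that a finite linear optimum over a nonempty polyhedron is attained; for a general convex set one would instead be forced to use a ``$\forall \varepsilon>0\,\exists \boldsymbol x,\boldsymbol y$\dots'' formulation, which would remain quantifier-eliminable but would obscure the structure. Since no such complication arises here, the combination of LP attainment and quantifier elimination yields the result directly.
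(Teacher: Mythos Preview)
Your proof is correct but proceeds differently from the paper's. You write down an existential first-order formula (existence of two feasible points $\boldsymbol x,\boldsymbol y\in P$ with $\boldsymbol u^\top(\boldsymbol x-\boldsymbol y)\geq c$) and then appeal to Tarski--Seidenberg quantifier elimination as a black box. The paper instead constructs the quantifier-free formula \emph{explicitly}: since $P$ lies in the positive orthant, nonemptiness forces a vertex, and every vertex is given by $B^{-1}\boldsymbol b'$ for some nonsingular $m\times m$ submatrix $B$ of $\begin{pmatrix}A\\ I_m\end{pmatrix}$; thus the vertices are definable by quantifier-free formulas in the entries of $A,\boldsymbol b$ (Cramer's rule), the unbounded case is handled by the recession-cone condition $A\boldsymbol u\geq\boldsymbol 0$ or $A\boldsymbol u\leq\boldsymbol 0$, and in the bounded case the width equals $\boldsymbol u^\top(\boldsymbol x_0-\boldsymbol x_1)$ for some pair of vertices, so one takes a finite disjunction over all candidate vertex pairs.

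Your route is shorter and does not use the constraint $\boldsymbol x\geq\boldsymbol 0$ at all, but it gives no direct handle on the shape or size of the resulting formula. The paper's explicit construction pays off later in the complexity analysis (Theorem~\ref{thm:main-two}), where one needs to bound the bit-length of the constraints defining the relaxed system in order to apply Proposition~\ref{prop:semi-alg}; an explicit vertex description makes those bounds transparent, whereas invoking generic quantifier-elimination bounds would be much looser.
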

\begin{proof}
A necessary condition that $\mathrm{width}_{\boldsymbol u}(P) \geq c$
is that $P$ be non-empty and hence, since it lies in the positive
orthont, contain a vertex.  Now each vertex of $P$, being the
intersection of $n $ linearly independent bounding hyperplanes, has
the form $B^{-1} \boldsymbol b'$, where $B$ is a non-singular
$n\times n$ sub-matrix of $\begin{pmatrix}A \\ I_n \end{pmatrix}$,
where $I_n$ denotes the identity matrix of dimension $n$, and
$\boldsymbol b'$ is a
corresponding sub-vector of $\begin{pmatrix} \boldsymbol b\\
  \boldsymbol 0 \end{pmatrix}$.  Hence the vertices of $P$ are
definable by quantifier-free formulas.

Assume that $P$ contains a vertex.  Then
$\mathrm{width}_{\boldsymbol u}(P)$ is infinite if and only if either
$\boldsymbol u$ or $-\boldsymbol u$ lie in the recession cone of $P$,
for which a sufficient and necessary condition is that
$A\boldsymbol u \geq \boldsymbol 0$ or
$A\boldsymbol u \leq \boldsymbol 0$.
If $\mathrm{width}_{\boldsymbol u}(P)$ is finite then there exist two
vertices 
$\boldsymbol x_0,\boldsymbol x_1$ of $P$ such that 
$\mathrm{width}_{\boldsymbol u}(P) = \boldsymbol u^\top (\boldsymbol 
x_0 - \boldsymbol x_1)$.  
The proposition follows by combining the above observations.
\end{proof}

We now commence the detailed description of the relaxation
construction.  The input is a bounded MIB program $\mathcal S$ in
standard form~\eqref{eq:standard} and a slack $\varepsilon>0$.  Assume
that $\mathcal S$ has at least one non-linear constraint.  We start
with the observation that for a given $\boldsymbol y \in \mathbb R^n$
the system~\eqref{eq:standard} admits a solution
$\boldsymbol x \in \mathbb Z^m$ if and only if the polyhedral set
\begin{gather} P(\boldsymbol{y}):=\{ \boldsymbol{x} \in \mathbb{R}^m : 
   \boldsymbol x\geq \boldsymbol 0,\, \boldsymbol{x}^\top A_i \boldsymbol{y}+\boldsymbol b_i^\top 
   \boldsymbol y
   \leq c_i, \; i=1,\ldots,\ell \} \, , 
\label{eq:defP}
\end{gather}
contains an integer point.

Let $H$ be an upper bound of the absolute value of the integer
constants in the system~\eqref{eq:standard}.  
Since  $\mathcal S$ is bounded, by~\cite[Lemma 3.1.25]{grotschel} 
the set $\{ \boldsymbol{y}\in \mathbb{R}^n : D\boldsymbol{y}\leq 
\boldsymbol{e}\}$ is contained in the ball
of radius $\kappa_1:=m^{1/2} H^{(m^2+m)}$ centred at the origin.

For a matrix $A$, let $\| A\|$ denote the spectral norm.
Recall that  if $A$ has entries of absolute value at most $H$ and has
$m$ columns then $\|A\|\leq \sqrt{m}H$.  Now write 
     \begin{gather}
\delta:=\min(\delta_0,1),\quad\text{where }
       \delta_0 := \min \left \{ \frac{\varepsilon}{\|A_i\| \kappa_1} : 
    i=1,\ldots, \ell \right\}  \geq \frac{\varepsilon}{m^{1/2} H \kappa_1}
\label{def:delta}
\end{gather}
  and define 
$U:=\{ \boldsymbol{u}\in\mathbb{Z}^m \setminus \{\boldsymbol{0}\} : 2\delta \|\boldsymbol{u}\| <
\omega(m) \}$, where $\omega(m)$ is as in Theorem~\ref{thm:flatness}.
Write $U=\{\boldsymbol{u}_1,\ldots,\boldsymbol{u}_s\}$ and
consider the following \emph{relaxed system} $\mathcal S'$ of linear
and bilinear constraints in exclusively real variables (where the
notation $P(\boldsymbol y)$ is as in~\eqref{eq:defP}):
\begin{equation}
  \begin{array}{ll}
& \boldsymbol x^\top A_i  \boldsymbol {y} + \boldsymbol{b}_i^\top
                      \boldsymbol y \leq c_i- \varepsilon
                      \qquad( i=1,\ldots,\ell)\\[2pt]
    &   \mathrm{width}_{\boldsymbol u_j}(P(\boldsymbol y)) 
                                                         \geq \omega(m)    \qquad (j=1,\ldots,s)\\[2pt]
    & D\boldsymbol{y}\leq \boldsymbol{e},\;
    \boldsymbol x\geq \boldsymbol 1 \\[2pt]
    & \boldsymbol{x} \in   \mathbb{R}^m,\,  \boldsymbol y \in \mathbb{R}^n 
\end{array}
\label{eq:relaxed2}
\end{equation}

\begin{proposition}
  If the relaxed system $\mathcal S'$ is satisfiable, then so is
  the original system $\mathcal S$.
\label{prop:relaxed}
\end{proposition}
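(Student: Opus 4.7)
The plan is to round a satisfying assignment $(\boldsymbol x^*,\boldsymbol y^*)$ of the relaxed system $\mathcal S'$ to an integer solution $(\boldsymbol x,\boldsymbol y^*)$ of $\mathcal S$: keep the real component $\boldsymbol y^*$ unchanged and exhibit $\boldsymbol x \in \mathbb Z^m$ as an integer point of the polyhedron $P(\boldsymbol y^*)$. Such a point will be produced by Theorem~\ref{thm:flatness}, so the crux of the argument is to establish the lower bound $\mathrm{width}(P(\boldsymbol y^*)) \geq \omega(m)$.

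The first step is to show that the closed Euclidean ball of radius $\delta$ around $\boldsymbol x^*$ is contained in $P(\boldsymbol y^*)$. Non-negativity for points in this ball follows from $\boldsymbol x^* \geq \boldsymbol 1$ together with $\delta \leq 1$, since each coordinate can shrink by at most $\delta$. For each bilinear constraint, Cauchy--Schwarz together with the bound $\|\boldsymbol y^*\| \leq \kappa_1$ and the definition of $\delta$ in~\eqref{def:delta} gives
\[ (\boldsymbol x - \boldsymbol x^*)^\top A_i \boldsymbol y^* \;\leq\; \delta\,\|A_i\|\,\kappa_1 \;\leq\; \varepsilon \]
whenever $\|\boldsymbol x-\boldsymbol x^*\|\leq\delta$, and combining this with the $\varepsilon$-slack enforced on $\boldsymbol x^*$ in $\mathcal S'$ yields $\boldsymbol x^\top A_i \boldsymbol y^* + \boldsymbol{b}_i^\top \boldsymbol y^* \leq c_i$. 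Hence the entire ball lies in $P(\boldsymbol y^*)$.

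Next I would bound the width of $P(\boldsymbol y^*)$ in an arbitrary direction $\boldsymbol u \in \mathbb Z^m\setminus\{\boldsymbol 0\}$ by splitting on whether $\boldsymbol u \in U$. If $\boldsymbol u = \boldsymbol u_j \in U$, the explicit constraint $\mathrm{width}_{\boldsymbol u_j}(P(\boldsymbol y^*)) \geq \omega(m)$ in $\mathcal S'$ supplies the bound directly. If $\boldsymbol u \notin U$, then by the defining property of $U$ we have $2\delta\|\boldsymbol u\| \geq \omega(m)$; combining this with monotonicity of $\mathrm{width}_{\boldsymbol u}$ under set inclusion and the fact that $B(\boldsymbol x^*,\delta) \subseteq P(\boldsymbol y^*)$ has width exactly $2\delta\|\boldsymbol u\|$ in direction $\boldsymbol u$ again gives $\mathrm{width}_{\boldsymbol u}(P(\boldsymbol y^*)) \geq \omega(m)$. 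Minimising over $\boldsymbol u$ yields $\mathrm{width}(P(\boldsymbol y^*)) \geq \omega(m)$, and Theorem~\ref{thm:flatness} then supplies the required integer point $\boldsymbol x \in P(\boldsymbol y^*)$; the pair $(\boldsymbol x,\boldsymbol y^*)$ satisfies $\mathcal S$ by construction, since $D\boldsymbol y^* \leq \boldsymbol e$ is inherited from $\mathcal S'$ and all remaining constraints hold because $\boldsymbol x \in P(\boldsymbol y^*)$.

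The main obstacle to making this argument go through cleanly is a minor mismatch between the non-strict width bound produced above and the strict hypothesis $\mathrm{width}(K) > \omega(n)$ of Theorem~\ref{thm:flatness}. I would resolve this by inflating the threshold $\omega(m)$ by a harmless additive constant in both the definition of $U$ and the explicit width constraints of $\mathcal S'$; this only mildly changes $\delta$ and leaves every estimate above intact. A subsidiary technical point is that the bound $\|\boldsymbol y^*\|\leq\kappa_1$ rests on the boundedness hypothesis for $\mathcal S$, showing exactly where that hypothesis is needed in the rounding step.
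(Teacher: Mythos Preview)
Your proposal is correct and follows essentially the same argument as the paper's proof: show that the ball $B_\delta(\boldsymbol x^*)$ lies in $P(\boldsymbol y^*)$, split the width lower bound according to whether $\boldsymbol u \in U$ or $\boldsymbol u \notin U$, and invoke the Flatness Theorem to obtain an integer point of $P(\boldsymbol y^*)$. Your observation about the strict-versus-non-strict hypothesis of Theorem~\ref{thm:flatness} is a legitimate nitpick that the paper's own proof also glosses over.
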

\begin{proof}
Let $\boldsymbol{x}^*,\boldsymbol{y}^*$
  be a solution of the system $\mathcal S'$, as  shown
  in~\eqref{eq:relaxed2}.  Consider the set $P(\boldsymbol y^*)$ as
  defined in~\eqref{eq:defP}.
  By construction we have
\begin{gather}
 \min_{\boldsymbol{u}\in U} \mathrm{width}_{\boldsymbol{u}}(
 P(\boldsymbol{y}^*) ) \geq \omega(m) \,  . 
\label{eq:width-small}
\end{gather}
But from the fact $\boldsymbol{x}^*$ satisfies each constraint
$\boldsymbol x^\top  A_i \boldsymbol{y}^* +\boldsymbol b_i^\top
\boldsymbol y^* \leq c_i$ with slack
$\varepsilon$ and that  $\boldsymbol x^* \geq \boldsymbol 1$, we see that the ball
$B_\delta(\boldsymbol{x}^*)$ is contained in $P(\boldsymbol{y}^*)$,
for $\delta$ as defined in~\eqref{def:delta}.  It follows that
\begin{eqnarray*}
  \mathrm{width}_{\boldsymbol{u}}(P(\boldsymbol{y}^*)) & \geq &
  2\delta\|u\| \\
  &\geq & \omega(m)
\end{eqnarray*}
for all $\boldsymbol{u} \not\in U$.  Together
with~\eqref{eq:width-small}, we have that
$\mathrm{width}(P(\boldsymbol{y}^*)) \geq \omega(m)$ and hence, by
Theorem~\ref{thm:flatness}, $P(\boldsymbol{y}^*)$ contains an integer
point.  This entails that the original system $\mathcal S$ is
satisfiable.
\end{proof}

\begin{proposition}
  If the relaxed system $\mathcal S'$ has no solution then every
  solution $\boldsymbol x^* \in \mathbb Z^m$ of the original system
  $\mathcal S$ that has slack $\varepsilon$ either has some component
  equal to zero or satisfies
  $|\boldsymbol{u}^\top \boldsymbol{x}^*| \leq \kappa_2$ for some
  $\boldsymbol u\in U$, where $\kappa_2$ is an explicit constant
  depending only on $\mathcal S$ and~$\varepsilon$.
\label{prop:split}
\end{proposition}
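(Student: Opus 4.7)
The plan is to extract, from the failure of $\mathcal S'$, a specific lattice direction $\boldsymbol u_j\in U$ in which $P(\boldsymbol y^*)$ is narrow, and then to bound $|\boldsymbol u_j^\top\boldsymbol x^*|$ by locating the narrow slab.

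First I would use the slack hypothesis to produce a witness $\boldsymbol y^*\in\mathbb R^n$ such that $(\boldsymbol x^*,\boldsymbol y^*)$ satisfies the $\varepsilon$-slack version of every nonlinear constraint and $D\boldsymbol y^*\leq \boldsymbol e$. Assuming no component of $\boldsymbol x^*$ equals zero (otherwise the first alternative of the conclusion already holds), we have $\boldsymbol x^*\geq \boldsymbol 1$. Consequently $(\boldsymbol x^*,\boldsymbol y^*)$ satisfies every constraint of $\mathcal S'$ in~\eqref{eq:relaxed2} apart from, possibly, the lattice-width inequalities. Since $\mathcal S'$ is assumed unsatisfiable, some lattice-width inequality must fail: there is $\boldsymbol u_j\in U$ with $\mathrm{width}_{\boldsymbol u_j}(P(\boldsymbol y^*))<\omega(m)$. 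This $\boldsymbol u_j$ is the candidate direction for the second alternative of the conclusion.

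Secondly, $\boldsymbol x^*$ satisfies the non-slack version of the nonlinear constraints trivially, so $\boldsymbol x^*\in P(\boldsymbol y^*)$. Combined with the width bound, this localises the projection inside an interval of length at most $\omega(m)$,
\[
\boldsymbol u_j^\top\boldsymbol x^*\in [\alpha_j,\,\alpha_j+\omega(m)],\qquad \alpha_j := \min\{\boldsymbol u_j^\top\boldsymbol x:\boldsymbol x\in P(\boldsymbol y^*)\},
\]
where the minimum is finite (because the width in direction $\boldsymbol u_j$ is finite) and attained at a vertex $\boldsymbol v$ of $P(\boldsymbol y^*)$. Thus bounding $|\boldsymbol u_j^\top \boldsymbol x^*|$ reduces to bounding $|\alpha_j|$, and the contribution of the interval-length term to $\kappa_2$ is at most $\omega(m)$.

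The main obstacle is to bound $|\alpha_j|$ uniformly in $\boldsymbol y^*$, because the optimal vertex $\boldsymbol v$ is determined by an $m\times m$ real basis $B$ whose entries are components of $A_i\boldsymbol y^*$; a plain Cramer's-rule estimate would involve a factor $1/|\det(B)|$ that has no a priori lower bound for non-integer matrices. I would address this by a semi-algebraic quantifier-elimination argument. The data $(\boldsymbol y^*,\boldsymbol v,\boldsymbol \mu^*,\alpha_j)$ of a primal vertex together with its matching dual vector and optimal value is governed by a quantifier-free first-order formula over the reals, which encodes primal/dual feasibility, complementary slackness, and the width inequality expressible via the formula given earlier in the section. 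The total bit length of this formula is polynomial in $\|\mathcal S\|$ and $\log(1/\varepsilon)$, and its $\boldsymbol y^*$-projection is contained in the ball of radius $\kappa_1$. Combining this with the fact that $\boldsymbol x^*+B_\delta(\boldsymbol 0)\subseteq P(\boldsymbol y^*)$ (which, with $\|\boldsymbol u_j\|<\omega(m)/(2\delta)$ from $\boldsymbol u_j\in U$, controls how close to degenerate the active basis $B$ can be), yields an effective bound on every $\alpha_j$ realised by such a tuple. Specifically, I would extract from the semi-algebraic description an explicit function of $\|\mathcal S\|$, $\varepsilon$, and $\omega(m)$ — using Proposition~\ref{prop:semi-alg} to handle the non-convex defining polynomials — and set
\[
\kappa_2 \;:=\; \|\boldsymbol u_j\|\cdot \|\boldsymbol v\|_{\max} + \omega(m),
\]
where $\|\boldsymbol v\|_{\max}$ denotes this explicit bound on vertex norms. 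Since $\|\boldsymbol u_j\|\leq \omega(m)/(2\delta)$ and $\delta\geq \varepsilon/(m^{1/2}H\kappa_1)$, the constant $\kappa_2$ depends only on $\mathcal S$ and $\varepsilon$, as required.
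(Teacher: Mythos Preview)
Your opening moves are right and match the paper: from $\boldsymbol x^*\geq\boldsymbol 1$ and the slack hypothesis you correctly deduce that $(\boldsymbol x^*,\boldsymbol y^*)$ satisfies all constraints of $\mathcal S'$ except possibly a width inequality, whence some $\boldsymbol u_j\in U$ has $\mathrm{width}_{\boldsymbol u_j}(P(\boldsymbol y^*))<\omega(m)$, and then $\boldsymbol u_j^\top\boldsymbol x^*\in[\alpha_j,\alpha_j+\omega(m)]$. The task is indeed reduced to bounding $|\alpha_j|$ uniformly in $\boldsymbol y^*$.

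The gap is in how you propose to obtain that bound. Proposition~\ref{prop:semi-alg} says only that a nonempty semialgebraic set contains \emph{some} point close to the origin; it does not say that \emph{every} point (in particular, every realisable $\alpha_j$) is bounded. Your appeal to it therefore does not yield the universal bound you need. The auxiliary remark that $B_\delta(\boldsymbol x^*)\subseteq P(\boldsymbol y^*)$ ``controls how close to degenerate the active basis $B$ can be'' is also not substantiated: a polyhedron can contain a ball of radius $\delta$ and still have vertices with arbitrarily large norm (take, e.g., $\{x\geq 0: x_1+\epsilon x_2\leq 1,\ -x_1+\epsilon x_2\leq 0\}$ as $\epsilon\to 0$), so an inscribed ball does not by itself lower-bound $|\det B|$ at a far-away vertex. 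In short, you have identified the right obstacle but not supplied an argument that overcomes it; the slack $\varepsilon$ must enter the bound on $|\alpha_j|$ in a more direct way than merely through~$\delta$.

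The paper avoids semialgebraic machinery entirely by a short geometric trick. If $\boldsymbol 0\in P(\boldsymbol y^*)$ then $|\boldsymbol u^\top\boldsymbol x^*|\leq\omega(m)$ immediately. Otherwise, let $\boldsymbol x$ be the point where the segment from $\boldsymbol 0$ to $\boldsymbol x^*$ meets $\partial P(\boldsymbol y^*)$, so $\boldsymbol x^*=(1+\lambda)\boldsymbol x$ for some $\lambda>0$. One bilinear constraint, say index $i_0$, is tight at $\boldsymbol x$; subtracting it from the $\varepsilon$-slack inequality at $\boldsymbol x^*$ gives $-\varepsilon\geq\lambda(c_{i_0}-\boldsymbol b_{i_0}^\top\boldsymbol y^*)$, hence $\lambda^{-1}\leq\varepsilon^{-1}(|c_{i_0}|+\|\boldsymbol b_{i_0}\|\kappa_1)$. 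Since $\boldsymbol x,\boldsymbol x^*\in P(\boldsymbol y^*)$ and $\boldsymbol x^*=(1+\lambda)\boldsymbol x$, one gets $|\boldsymbol u^\top\boldsymbol x^*|=(1+\lambda^{-1})\,|\boldsymbol u^\top(\boldsymbol x^*-\boldsymbol x)|\leq(1+\lambda^{-1})\,\omega(m)$, which is the explicit $\kappa_2$. This is the missing idea: use the ray through the origin and the tight constraint at the boundary crossing to convert the slack $\varepsilon$ directly into a bound on $\lambda^{-1}$.
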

  \begin{proof}
    Assume that $\mathcal S'$ has no solution.  
    Let $\boldsymbol{x}^* \in \mathbb{Z}^m$ and
    $\boldsymbol{y}^* \in \mathbb{R}^n$ be a solution of $\mathcal S$
    with slack $\varepsilon$.  If some component of $\boldsymbol{x}^*$
    is zero then we are done, so we may suppose that
    $\boldsymbol x^* \geq \boldsymbol 1$.  By assumption,
    $\boldsymbol{x}^*,\boldsymbol{y}^*$ is not a solution of
    $\mathcal S'$ and so it must hold that
  \begin{gather}
 \min_{\boldsymbol{u}\in U} \mathrm{width}_{\boldsymbol{u}}(
 P(\boldsymbol{y}^*) ) < \omega(m) \,  , 
\label{eq:narrow}
\end{gather}
where  $P(\boldsymbol{y}^*)$ is as defined in~\eqref{eq:defP}.

Let $\boldsymbol{u} \in U$
be the vector achieving the minimum on the left-hand side of~\eqref{eq:narrow}.
We will exhibit an upper bound on $|\boldsymbol{u}^\top 
\boldsymbol{x}^*|$ that does not depend on~$\boldsymbol{y}^*$.  

Assume first that $P(\boldsymbol{y}^*)$ contains the origin.  Then by~\eqref{eq:narrow},
\[ |\boldsymbol{u}^\top \boldsymbol{x}^\ast| = |\boldsymbol u^\top
  (\boldsymbol x^\ast - \boldsymbol 0)|   \leq \omega(m) \, .\]

Assume now that $P(\boldsymbol{y}^*)$ does not contain the origin.  Let
$L$ be the line segment connecting the origin to $\boldsymbol{x}^*$,
and denote by $\boldsymbol{x}$ the point at which $L$ intersects the
boundary of $P(\boldsymbol{y}^*)$.  Then we have $\boldsymbol x^* -
\boldsymbol x = \lambda \boldsymbol x$ for some $\lambda > 0$.
Moreover, since $\boldsymbol x$ lies on the boundary of $P(\boldsymbol{y}^*)$
there exists $i_0 \in \{1,\ldots,\ell\}$ such that 
\begin{gather} \boldsymbol{x}^\top A_{i_0} \boldsymbol{y}^* + \boldsymbol
  b_{i_0}^\top \boldsymbol y^*=c_{i_0} \, ,
\label{eq:tight}
\end{gather}i.e., one of
inequalities that define $P(\boldsymbol{y}^*)$ is tight at
$\boldsymbol x$.  But since
$\boldsymbol x^*, \boldsymbol y^*$ satisfies $\mathcal S$ with slack
$\varepsilon$, we also have that
$(\boldsymbol{x}^*)^\top A_{i_0} \boldsymbol{y}^* +\boldsymbol
b_{i_0}^\top \boldsymbol y^* \leq c_{i_0} -\varepsilon$.  Subtracting
Equation~\eqref{eq:tight} from the previous inequality gives
\begin{eqnarray*}
-\varepsilon & \geq & (\boldsymbol x^*-\boldsymbol x)^\top A_{i_0}
                      \boldsymbol y^*\\
             &=& \lambda (\boldsymbol x^\top A_{i_0}\boldsymbol y^*)\\
  &=& \lambda (c_{i_0} - \boldsymbol b_{i_0}^\top \boldsymbol y^*) \, .
  \end{eqnarray*}
  Since $\varepsilon,\lambda>0$ this entails that
  $c_{i_0} - \boldsymbol b_{i_0}^\top \boldsymbol y^*<0$ and hence
  \begin{eqnarray}
\lambda^{-1} & \leq & \varepsilon^{-1} |c_{i_0} - \boldsymbol
                      b_{i_0}^\top \boldsymbol y^*| \notag \\
    & \leq &  \varepsilon^{-1}\left( |c_{i_0}|+\|\boldsymbol b_{i_0}\|
             \kappa_1\right)
             \label{eq:bound}
                               \end{eqnarray}
We deduce that
\begin{eqnarray*}
|\boldsymbol{u}^\top \boldsymbol{x}^*| 
 & \leq  &|\boldsymbol{u}^\top (\boldsymbol{x}^*-\boldsymbol{x}) |
    +|\boldsymbol{u}^\top \boldsymbol{x}| \\
  & =  & |\boldsymbol{u}^\top (\boldsymbol{x}^*-\boldsymbol{x})
         | \,  (1+\lambda^{-1})\\
  & \leq & \omega(m)
  \left(1+ \varepsilon^{-1}(|c_{i_0}|+\|\boldsymbol b_{i_0}
           \|\kappa_1) \right) \quad
           \text{ by~\eqref{eq:narrow} and~\eqref{eq:bound}.} 
  \end{eqnarray*}
  Thus, defining
  \begin{gather} \kappa_2:= \omega(m) \,  (1+
    H\varepsilon^{-1}(1+m^{1/2}\kappa_1) ) \,   ,
    \label{def:kappa2}
  \end{gather}
we have
  $|\boldsymbol{u}^\top \boldsymbol{x}^*| \leq \kappa_2$.

  In summary, we have that $|\boldsymbol{u}^\top \boldsymbol{x}^* |
  \leq \kappa_2$
  for every integer point $\boldsymbol x^*$ of $P(\boldsymbol{y}^*)$,
  as required in the proposition.
\end{proof}

\subsection{Decision Procedure}
In this section we describe a decision procedure for the Gap
Satisfiability Problem for bounded MIB systems.  This is a recursive
procedure based on the relaxation construction in the preceding
section.  We first present a conceptually simple version of the
procedure, with no complexity bound, and then give a more detailed
treatment from which bounds can be extracted.

\begin{theorem}
The Gap Satisfiability Problem is decidable for bounded MIB systems.
\label{thm:main-one}
\end{theorem}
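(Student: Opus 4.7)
The plan is to give a recursive decision procedure that combines Propositions~\ref{prop:relaxed} and~\ref{prop:split}. Given a bounded MIB system $\mathcal{S}$ in standard form~\eqref{eq:standard} together with a slack $\varepsilon>0$, we first construct the relaxed real-variable system $\mathcal{S}'$ of~\eqref{eq:relaxed2}. The width constraints $\mathrm{width}_{\boldsymbol{u}_j}(P(\boldsymbol{y})) \geq \omega(m)$ are expressible by a quantifier-free formula in the first-order theory of real closed fields, as established above, so testing satisfiability of $\mathcal{S}'$ reduces to an instance of the existential theory of the reals and is decidable by Tarski--Seidenberg. If $\mathcal{S}'$ is satisfiable, then by Proposition~\ref{prop:relaxed} the original system $\mathcal{S}$ is itself satisfiable, and we output \textbf{SAT}.

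If instead $\mathcal{S}'$ is unsatisfiable, we branch using Proposition~\ref{prop:split}: every solution $\boldsymbol{x}^* \in \mathbb{Z}^m$ of $\mathcal{S}$ with slack $\varepsilon$ either has $x^*_j = 0$ for some $j$, or else satisfies $\boldsymbol{u}^\top \boldsymbol{x}^* = k$ for some $\boldsymbol{u} \in U$ and some integer $k$ with $|k| \leq \kappa_2$. Since both $U$ (read off from $\varepsilon$ and the coefficients of $\mathcal{S}$) and the range of admissible $k$ are finite, this yields a finite family of derived instances obtained by adding one such integer-linear equality to $\mathcal{S}$. Each equality can be used to eliminate one integer variable via a unimodular change of coordinates; Proposition~\ref{prop:semilinear} is applied to parametrise the remaining integer points in non-negative standard form, possibly splitting the instance further into a finite disjunction of standard-form systems. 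The resulting instances are bounded MIB systems in at most $m-1$ integer variables, to which we recursively apply the procedure with the same slack $\varepsilon$. We return \textbf{SAT} if any recursive call returns \textbf{SAT}, and \textbf{UNSAT} otherwise; the base case $m=0$ is a linear program over the reals and is handled directly.

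Correctness is established by induction on $m$. If $\mathcal{S}$ has a solution with slack $\varepsilon$, then either $\mathcal{S}'$ is satisfiable and we correctly return \textbf{SAT}, or Proposition~\ref{prop:split} places the witness inside one of the branches, which by the induction hypothesis returns \textbf{SAT}. Conversely, every branch is a restriction of $\mathcal{S}$, so a \textbf{SAT} verdict from any recursive call implies satisfiability of $\mathcal{S}$ itself; we never answer \textbf{SAT} when $\mathcal{S}$ is unsatisfiable. In the intermediate case in which $\mathcal{S}$ is satisfiable but no solution has slack $\varepsilon$, either output is admissible. Termination follows from the strict decrease in the number of integer variables at every recursive call.

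The principal obstacle is ensuring that the branching step preserves both the bounded-standard-form shape and the exact slack $\varepsilon$ of integer solutions. Both amount to bookkeeping: because the imposed equality lies entirely in $\boldsymbol{x}$, substituting $\boldsymbol{x} = P\boldsymbol{z} + \boldsymbol{w}$ merely redefines the linear-in-$\boldsymbol{y}$ summand $\boldsymbol{b}_i^\top \boldsymbol{y}$ and the constant $c_i$ in each non-linear constraint, leaving the value of $\boldsymbol{x}^\top A_i \boldsymbol{y} + \boldsymbol{b}_i^\top \boldsymbol{y} - c_i$ unchanged at any satisfying assignment, so the slack $\varepsilon$ passes through intact. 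The polytope $\{\boldsymbol{y} : D\boldsymbol{y} \leq \boldsymbol{e}\}$ is untouched, so boundedness is preserved. A quantitative tracking of $|U|$, $\kappa_2$, the basis provided by Proposition~\ref{prop:semilinear}, and the complexity of the existential-theory-of-reals decision would then yield the sharper complexity bound that underpins the nondeterministic exponential-time claim of the introduction, but the qualitative decidability statement is settled by the procedure above.
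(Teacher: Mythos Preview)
Your proposal is correct and follows essentially the same approach as the paper's proof: test the relaxed system~\eqref{eq:relaxed2} via the existential theory of the reals, output \textsc{SAT} if it holds (Proposition~\ref{prop:relaxed}), and otherwise branch on the finitely many equalities supplied by Proposition~\ref{prop:split}, eliminating one integer variable via Proposition~\ref{prop:semilinear} and recursing. The only cosmetic differences are that the paper takes ``no non-linear constraint'' rather than $m=0$ as the base case and, upon a \textsc{SAT} verdict, additionally enumerates $\boldsymbol{x}^*\in\mathbb{Z}^m$ to exhibit a witness; your explicit remarks on preservation of slack and boundedness under the substitution are a welcome clarification that the paper leaves implicit.
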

\begin{proof}
  The procedure to solve the Gap Satisfiability Problem is as follows.
  Consider an instance of the problem, consisting of an MIB system in
  the form~\eqref{eq:standard} and slack $\varepsilon>0$.  If there
  are no non-linear constraints then the problem instance is just a
  system of linear inequalities in real and integer variables, whose
  satisfiability is straightforward to discern.  Thus we may assume
  that there is at least one non-linear constraint.  We construct the
  associated relaxed system $\mathcal S'$, which has the
  form~\eqref{eq:relaxed2}.  Using a decision procedure for the
  existential theory of real-closed fields we determine whether the
  system $\mathcal S'$ is satisfiable.

  If $\mathcal S'$ is satisfiable then
  Proposition~\ref{prop:relaxed} guarantees that the original MIB
  system $\mathcal S$ is also satisfiable.  We can then find a satisfying
  assignment of $\mathcal S$ by enumerating through all possible
  values $\boldsymbol x^\ast \in \mathbb Z^m$ and solving a linear program
  to decide whether there exists $\boldsymbol y^\ast \in \mathbb R^n$ such
  that $\boldsymbol x^*,\boldsymbol y^*$ satisfies $\mathcal S$.

  If the relaxed problem has no satisfying assignment then
  Proposition~\ref{prop:split} furnishes a finite set $\mathcal E$ of
  linear equations of the form $\boldsymbol u^\top \boldsymbol x = b$,
  with coefficients $\boldsymbol{u}\in\mathbb{Z}^m$ and
  $b\in \mathbb{Z}$, such that for any solution
  $\boldsymbol x^*\in \mathbb
  Z^m, \boldsymbol y^\ast \in \mathbb R^n$ of~\eqref{eq:bilinear} that
  has slack $\varepsilon$,
  the integer part $\boldsymbol x^\ast$
  satisfies an equation in $\mathcal E$.  We iterate through all such
  equations $\boldsymbol u^\top \boldsymbol x = b$ and in each case
  we apply Proposition~\ref{prop:semilinear} to write
  \[ \{ \boldsymbol x\in \mathbb Z^m : \boldsymbol u^\top \boldsymbol
    x =b,\, \boldsymbol x\geq \boldsymbol 0 \} \] as a linear set
  $L(B,P)$ for some finite set $B\subseteq \mathbb{Z}^m$ and matrix
  $P \in \mathbb{Z}^{m \times m-1}$.  Then for each vector
  $\boldsymbol w \in B$, we apply the change of variables
  $\boldsymbol x = \boldsymbol w + P\boldsymbol z$ to obtain a MIB
  system in one fewer integer variable to which we can recursively
  apply the procedure to determine satisfiability.

\end{proof}

In the following result we retrace the proof of
Theorem~\ref{thm:main-one}, this time keeping track of the size of the
integers involved.  We thereby obtain an upper bound on the smallest
satisfying assignment, showing that the gap satisfiability problem can
be solved in nondeterministic exponential time.

\begin{theorem}
  Consider a MIB system~\eqref{eq:standard} in which the integer
  constants have absolute value at most $H$.  If such a system is
  satisfiable with slack $\varepsilon$ then there is a satisfying
  assignment under which the integer variables have absolute value at
  most $2^{{\kappa_3}^{O(m^3(m+n))}}$, where
  $\kappa_3 := \left( \frac{mH^{m^2}}{\varepsilon} \right)$.
\label{thm:main-two}
\end{theorem}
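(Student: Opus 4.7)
My plan is to re-examine the recursive procedure from Theorem~\ref{thm:main-one}, tracking the magnitude of the integer coefficients appearing in the MIB system at each level together with the magnitude of the integer witness produced at the base case. Let $H_k$ denote an upper bound on the absolute values of the integer constants at recursion level $k$, with $H_0 = H$, and let $m_k = m - k$ be the remaining number of integer variables. Because $m_k$ decreases by one per level, the recursion has depth at most $m$, so it suffices to control the per-level growth of $H_k$, combine it with the size of a base-case witness, and then iterate.

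For the UNSAT branch at level $k$, the key observation is that both the norm of every $\boldsymbol u \in U$ and the bound $\kappa_2^{(k)}$ appearing in Proposition~\ref{prop:split} are of the form $H_k^{O(m_k^2)}\varepsilon^{-O(1)}$. Indeed, from the definitions \eqref{def:delta}--\eqref{def:kappa2} with $\kappa_1^{(k)} = m_k^{1/2}H_k^{m_k^2+m_k}$ one reads off $\delta_k^{-1} \leq m_k^{1/2} H_k \kappa_1^{(k)}\varepsilon^{-1}$ and $\kappa_2^{(k)} = O(\omega(m_k) H_k \kappa_1^{(k)}\varepsilon^{-1})$. Proposition~\ref{prop:split} then produces an equation $\boldsymbol u^\top \boldsymbol x = b$ with $\|\boldsymbol u\|, |b|$ of this size, and Proposition~\ref{prop:semilinear} supplies a parametrization $\boldsymbol x = \boldsymbol w + P\boldsymbol z$ whose entries are bounded by $(2+(m_k+1)\kappa_2^{(k)})^{m_k}$. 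Expanding each bilinear constraint as $\boldsymbol z^\top(P^\top A_i)\boldsymbol y + (A_i^\top\boldsymbol w + \boldsymbol b_i)^\top \boldsymbol y \leq c_i$ yields a level-$(k+1)$ MIB system whose integer coefficients satisfy $\log H_{k+1} \leq O(m_k^3)\log H_k + O(m_k\log(m_k H/\varepsilon))$.

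For the SAT branch at the level $k^\ast$ where the procedure accepts, the relaxed system $\mathcal{S}'$ is an existential formula over $\mathbb R$ in $m_{k^\ast}+n$ variables with description length $L$ polynomial in $\log H_{k^\ast}$, $m_{k^\ast}$, $n$, and $\log(1/\varepsilon)$. Proposition~\ref{prop:semi-alg} provides a real satisfying assignment $(\boldsymbol x^\ast,\boldsymbol y^\ast)$ with $\|\boldsymbol x^\ast\|_\infty \leq 2^{L^{8(m_{k^\ast}+n)}}$, and the Flatness Theorem argument of Proposition~\ref{prop:relaxed} then yields an integer point in $P(\boldsymbol y^\ast)$ of magnitude comparable to $\boldsymbol x^\ast$. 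Unrolling the recursion, the integer part of a satisfying assignment to the original system is recovered by composing the at most $m$ changes of variables $\boldsymbol x = \boldsymbol w_0 + P_0(\boldsymbol w_1 + P_1(\cdots))$; each $\|\boldsymbol w_k\|_\infty, \|P_k\|_\infty$ is controlled in terms of $H_k$, so the composed bound is, up to polynomial factors, the product of these per-level norms times the leaf witness. Iterating $\log H_{k+1} \leq O(m^3)\log H_k + O(m\log(mH/\varepsilon))$ for $k \leq m$ gives $\log H_m$ bounded by a polynomial in $m$ times $\kappa_3$; combining with the $(m+n)$-factor contributed by Proposition~\ref{prop:semi-alg} at the base case yields $2^{\kappa_3^{O(m^3(m+n))}}$ as stated.

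The main obstacle is verifying that the composition of the three independent sources of single-exponential blow-up, namely Proposition~\ref{prop:semilinear}, Proposition~\ref{prop:semi-alg}, and the quantities in \eqref{def:delta}--\eqref{def:kappa2}, collapses to a single exponential of the claimed form rather than to a tower. This will require careful parameter normalisation at each level, in particular ensuring that the slack $\varepsilon$ is preserved through the changes of variables so that the recursive hypothesis applies with the same $\varepsilon$ at every level, and that the introduction of the linear terms $\boldsymbol b_i^\top\boldsymbol y$ in the standard form does not induce additional dependence on the real variables that would inflate the bound.
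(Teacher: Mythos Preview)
Your overall plan coincides with the paper's: track the growth of the integer coefficients through the variable-elimination recursion of Theorem~\ref{thm:main-one} and then invoke Proposition~\ref{prop:semi-alg} at the leaf where the relaxed system~\eqref{eq:relaxed2} is satisfiable. The paper carries out essentially the same bookkeeping, arriving at $H_k \leq \kappa_3^{O(m^2)}$ uniformly over the recursion and then $\kappa_3^{O(m^3)}$ after expressing the width constraints via vertices of $P(\boldsymbol y)$.

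There is, however, a genuine gap in your SAT-branch argument. You write that Proposition~\ref{prop:semi-alg} furnishes a real solution $(\boldsymbol x^\ast,\boldsymbol y^\ast)$ of bounded norm and that ``the Flatness Theorem argument of Proposition~\ref{prop:relaxed} then yields an integer point in $P(\boldsymbol y^\ast)$ of magnitude comparable to $\boldsymbol x^\ast$''. The Flatness Theorem only asserts \emph{existence} of an integer point in $P(\boldsymbol y^\ast)$; it says nothing about where that point lies, and $P(\boldsymbol y^\ast)$ need not be bounded (the constraints $\boldsymbol x \geq \boldsymbol 0$ together with finitely many half-spaces $\boldsymbol x^\top A_i \boldsymbol y^\ast \leq c_i - \boldsymbol b_i^\top \boldsymbol y^\ast$ may well leave an unbounded cone). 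The ball $B_\delta(\boldsymbol x^\ast)$ that appears in the proof of Proposition~\ref{prop:relaxed} has radius $\delta \leq 1$, so it need not contain an integer point either. The paper closes this gap by \emph{augmenting} $\mathcal S'$ with a fresh real variable $r$ subject to $r \geq \|\boldsymbol x\|+1$ and $r \geq \|\boldsymbol w \pm \omega(m)\boldsymbol u\|$ for every vertex $\boldsymbol w$ of $P(\boldsymbol y)$ and every $\boldsymbol u \in U$. These extra constraints force the truncated set $\{\boldsymbol x \in P(\boldsymbol y^\ast) : \|\boldsymbol x\| \leq r\}$ to retain lattice width at least $\omega(m)$, so the integer point is trapped in the ball of radius $r$; Proposition~\ref{prop:semi-alg} then bounds $r$ directly. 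Without this (or an equivalent) device your leaf witness is not bounded.

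A smaller point: your recursion bookkeeping is loose. Iterating $\log H_{k+1} \leq O(m^3)\log H_k + O(m\log(mH/\varepsilon))$ for $m$ steps gives $\log H_m \leq (Cm^3)^m \log H_0 + \cdots = m^{O(m)}\log H$, which is not ``a polynomial in $m$ times $\kappa_3$''. The stated bound $\kappa_3^{O(m^3(m+n))}$ is generous enough to absorb $m^{O(m)}$ (since $\kappa_3 \geq m$, one has $m^{O(m)} \leq \kappa_3^{O(m)}$), but you should say so explicitly rather than claim a polynomial dependence that is not there.
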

\begin{proof}
  We first analyse the effect of a single variable-elimination step on
  the size of the integers in the system~\eqref{eq:standard}.  Recall
  that to eliminate an integer variable we assert a linear equation
  $\boldsymbol u^\top \boldsymbol x = b$, where
  $\| \boldsymbol u \| \leq \frac{2w(m)}{\delta}$ and
  $|b| \leq \kappa_2$.  Combining the lower bound
  $\delta \geq \frac{\varepsilon}{m^{1/2} H\kappa_1}$
  from~\eqref{def:delta}, the definition
  $\kappa_1:=m^{1/2} H^{(m^2+m)}$, the definition of $\kappa_2$
  in~\eqref{def:kappa2}, and the bound $\omega(m)=O(m^{3/2})$, we
  obtain that $ \| \boldsymbol u \|, |b| = \kappa_3^{O(1)}$, for
  $\kappa_3 := \left( \frac{mH^{m^2}}{\varepsilon} \right)$.
  
  Employing Proposition~\ref{prop:semilinear}, the equation
  $\boldsymbol u^\top \boldsymbol x = b,\, \boldsymbol x\geq
  \boldsymbol 0$, determines a substitution
  $\boldsymbol x = P\boldsymbol z + \boldsymbol w$ in which the
  elements of $P$ and $\boldsymbol w$ have absolute value at most
  $\kappa_3^{O(m)}$.  Since there are $m$ integer variables, the
  constants appearing over all MIB instances arising through the
  process of variable elimination have absolute value at most
  $\kappa_3^{O(m^2)}$.

  Consider a version of the relaxed system~\eqref{eq:relaxed2} in
  which the integer constants have magnitude at most $\kappa_3^{O(m^2)}$.
  For the purposes of our complexity analysis we augment the system
  with a new variable $r$ and constraints
  $r \geq \| \boldsymbol x \|+1$ and
  $r \geq \| \boldsymbol w \pm w(m)\boldsymbol u\|$ for each vertex
  $\boldsymbol w$ of the polyhedron $P(\boldsymbol y)$ (as defined
  in~\eqref{eq:defP}) and $\boldsymbol u \in U$.  The integer constants in
  the resulting system have absolute value at most
  $\kappa_3^{O(m^3)}$ by Hadamard's determinant inequality.  By construction, if
  $\boldsymbol x^*,\boldsymbol y^*$ is a satisfying assignment
  of~\eqref{eq:relaxed2} then the convex set
  $\{ \boldsymbol x \in P(\boldsymbol y^*) : \| \boldsymbol x\| \leq r\}$ has lattice width at
  most $w(m)$ and hence contains an integer point.  By
  Proposition~\ref{prop:semi-alg} an upper bound for $r$ is
  $2^{\kappa_3^{O(m^3(m+n))}}$, which concludes the proof.
\end{proof}

\begin{remark}
It is evident that the double exponential dependence of the magnitude
of the smallest satisfying assignment on the number of variables
in Theorem~\ref{thm:main-two} is unavoidable.  Indeed, consider the
following MIB system:
\begin{align*}
& x_iy_i \leq 1\quad (i=1,\ldots,n)\\
& x_{i+1}y_i \geq x_iy_0\quad (i=1,\ldots,n-1)\\
& x_1=2, y_0=1\\
& x_1,\ldots,x_n \in \mathbb{Z}_{\geq 0},\, y_0,\ldots,y_n \in
                 \mathbb{R}_{\geq 0}    
\end{align*}
Then any satisfying assignment satisfies
$x_{i+1} \geq \frac{x_i}{y_i} \geq x_i^2$ for $I=1,\ldots,n-1$, whence
$x_n \geq 2^{2^{n-1}}$.  The system moreover has a satisfying
assignment with slack $\varepsilon$ for any $\varepsilon>0$, obtained by
successively setting $y_i:=\frac{1+\varepsilon}{x_i}$ and
$x_{i+1}:= \lfloor \frac{x_i+\varepsilon}{y_i}\rfloor $ for $i=1,\ldots,n-1$.
\label{rem:lower}
\end{remark}

Proposition~\ref{prop:convex-hull} and Corollary~\ref{cor:bilinear}
give an exponential-time Turing reduction of the Gap Domination
Problem for MPTA to the Gap Satisfiability Problem for bounded MIB
systems, such that resulting instances of the Gap Satisfiability
Problem have size polynomial in that of the input MPTA.  We thus
obtain our second main result.
\begin{theorem}
The Gap Domination Problem for MPTA is decidable in non-deterministic
exponential time.
  \end{theorem}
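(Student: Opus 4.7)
The plan is to combine Proposition~\ref{prop:convex-hull} and Corollary~\ref{cor:bilinear} with the complexity bound provided by Theorem~\ref{thm:main-two}. Given an instance of the Gap Domination Problem consisting of an MPTA $\mathcal A$ with $d$ observers, a target $\gamma$, and slack $\varepsilon > 0$, I first invoke Proposition~\ref{prop:convex-hull} to express $\mathcal S_{\mathcal A}$ as a union of at most exponentially many linear sets $L_1,\ldots,L_K$, each of description length polynomial in $\|\mathcal A\|$ and each of the form $L_k = \{\boldsymbol w_k + P_k\boldsymbol z : \boldsymbol z \in \mathbb Z^{d_k}_{\geq 0}\}$. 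By Corollary~\ref{cor:bilinear}, a run $\rho$ with $\cval(\rho) \leq \gamma$ exists if and only if the system~\eqref{eq:master-system} is satisfiable when $\mathcal S_{\mathcal A}$ is replaced by some $L_k$.

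For each $k$ I apply the substitution $(\gamma_1,\ldots,\gamma_{d+1}) = \boldsymbol w_k + P_k \boldsymbol z$ in~\eqref{eq:master-system} to produce a bounded mixed-integer bilinear instance $\mathcal S_k$ in the standard form~\eqref{eq:standard}. The integer variables are the components of $\boldsymbol z$, at most $O(d^2)$ in number, and the real variables are $\lambda_1,\ldots,\lambda_{d+1}$; boundedness holds because the $\lambda_j$ must lie in the unit simplex. A slack of $\varepsilon$ on the Domination constraint $\cval(\rho)\leq\gamma$ translates, componentwise, into a slack of $\varepsilon$ on each of the bilinear inequalities of $\mathcal S_k$, because these inequalities encode exactly the coordinates of the realised cost vector.

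The nondeterministic exponential-time algorithm now proceeds as follows. Nondeterministically guess an index $k \in \{1,\ldots,K\}$, which takes polynomially many bits since $K$ is at most exponential. Invoke Theorem~\ref{thm:main-two} on $\mathcal S_k$ with slack $\varepsilon$; observing that the parameters $m$, $n$, $\log H$ and $\log(1/\varepsilon)$ are all polynomial in $\|\mathcal A\|$, the bound $2^{\kappa_3^{O(m^3(m+n))}}$ on the smallest satisfying integer assignment has bit length exponential in $\|\mathcal A\|$. Nondeterministically guess integer values $\boldsymbol z^\ast$ within this bound, substitute into $\mathcal S_k$, and solve the remaining linear program in $\lambda_1,\ldots,\lambda_{d+1}$ to verify that the bilinear constraints are met with slack $\varepsilon$; this LP has exponential bit length and hence is solvable in deterministic exponential time. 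Output SAT if any branch accepts. If no branch accepts for any~$k$, Theorem~\ref{thm:main-one} applied branch by branch guarantees that no run achieves $\cval(\rho) \leq \gamma$ at all, so output UNSAT.

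The principal point that must be treated carefully is the slack transfer between the Gap Domination Problem and the instances $\mathcal S_k$. One must verify that a run $\rho$ with $\cval(\rho) \leq \gamma - \varepsilon$ yields, via the convex-hull representation of Proposition~\ref{prop:convex-hull}, a satisfying assignment of some $\mathcal S_k$ with slack $\varepsilon$ on every bilinear constraint; and conversely that any satisfying assignment of any $\mathcal S_k$, with or without slack, furnishes a convex combination that realises the cost vector of some run $\rho$ with $\cval(\rho) \leq \gamma$. Both directions follow immediately from the semantics of the convex-hull representation, but this correspondence needs to be spelled out in order to recover the exact SAT/UNSAT guarantees required by the Gap Domination specification.
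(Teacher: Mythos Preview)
Your overall strategy matches the paper's: reduce via Proposition~\ref{prop:convex-hull} and Corollary~\ref{cor:bilinear} to polynomially-sized bounded MIB instances (exponentially many of them), then invoke Theorem~\ref{thm:main-two} to bound the integer part of a witness and guess-and-check in NEXPTIME.  The paper states exactly this reduction in one sentence; you have spelled it out.

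There is, however, a genuine glitch in your verification step.  After guessing $\boldsymbol z^\ast$ you solve the residual LP \emph{with slack $\varepsilon$}.  But Theorem~\ref{thm:main-two} only promises that if $\mathcal S_k$ is satisfiable with slack $\varepsilon$ then there is a \emph{plain} satisfying assignment whose integer part is bounded; it does not promise that this bounded assignment itself has any slack (trace Proposition~\ref{prop:relaxed}: the integer point produced lies in $P(\boldsymbol y^\ast)$, which is cut out by the constraints $\leq c_i$, not $\leq c_i-\varepsilon$).  Hence your slack-$\varepsilon$ LP may be infeasible for every bounded $\boldsymbol z^\ast$ even when some $\mathcal S_k$ is satisfiable with slack $\varepsilon$, and your machine would wrongly reject an instance that the Gap specification obliges you to accept.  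The fix is simply to drop the slack in the LP check: accept whenever the substituted system is satisfiable at all.  Correctness then follows directly from the contrapositive of Theorem~\ref{thm:main-two} (not Theorem~\ref{thm:main-one}, which is the bare decidability statement and says nothing about bounded witnesses): if no bounded $\boldsymbol z^\ast$ satisfies any $\mathcal S_k$, then no $\mathcal S_k$ is satisfiable with slack $\varepsilon$, hence no run achieves $\cval(\rho)\leq\gamma-\varepsilon$, and ``not dominated'' is a legal answer.
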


\section{Conclusion}
Our main result shows that pareto curve of undominated reachable
observer values of a given MPTA can be approximated to arbitrary
precision.  This is in contrast with the situation for weighted timed
games, where it was recently shown that the optimal value of a
weighted timed game with positive and negative rates cannot be
computed to arbitrary precision~\cite{Guilmant24}.

Throughout this paper we have worked with MPTA with clock guards
defined by conjunctions of non-strict inequalities.  However, we claim
that for an MPTA $\mathcal A$ with guards comprising conjunctions of
both strict and non-strict inequalities, there exists an MPTA
$\mathcal A'$ with exclusively closed guards over the same set
$\mathcal Y$ of observers, such that every observer valuation
$\gamma \in \mathbb{R}^{\mathcal Y}$ reachable in $\mathcal A$ is also
reachable in $\mathcal A'$ and, conversely, for every valuation
$\gamma' \in \mathbb{R}^{\mathcal Y}$ reachable in $\mathcal A'$ and
every $\varepsilon>0$ there exists a valuation
$\gamma \in \mathbb{R}^{\mathcal Y}$ reachable in $\mathcal A$ such
that $|\gamma(c)-\gamma'(c)|<\varepsilon$ for all $c\in \mathcal Y$.
Indeed, such an MPTA $\mathcal A'$ is obtained by directly applying
the closure construction for timed automata in~\cite[Section
4]{OuaknineW03} to MPTA.  Then the ability to compute the pareto curve
of undominated reachable observer values of $\mathcal A'$ to arbitrary
precision allows one to achieve the same end for $\mathcal A$.

A direction for future work is to consider the feasibility of
approximate pareto analysis over infinite runs of MPTA.  For
double-priced timed automata, that is, MPTA with a single cost and
reward observer, it is known how to compute the optimal reward-to-cost
ratio over infinite computations using the
corner-point abstraction~\cite{Bouyer08c}.  For more general MPTA it is natural to consider
specifications that refer to multiple reward-to-cost ratios.





\bibliography{references}

\end{document}